\newcommand*{\texPath}{.}
\def\blfootnote{\gdef\@thefnmark{}\@footnotetext}
\title{On the Fine-grained Complexity of One-Dimensional Dynamic Programming}
\author{Marvin K\"unnemann \qquad Ramamohan Paturi \qquad Stefan Schneider
\\University of California, San Diego}
\newcommand{\myparagraph}[1]{\vspace{0.4em}{\noindent \bfseries #1}}
\begin{document}
\blfootnote{This research is supported by the Simons Foundation. This research is support
ed by NSF
grant CCF-1213151 from the
Division  of Computing and Communication Foundations.
Any opinions,
findings and conclusions or
recommendations expressed in this material are those
of the authors and do
not necessarily reflect the
views of the National Science Foundation.}

\maketitle

\begin{abstract}
  In this paper, we investigate the complexity of
  \emph{one-dimensional dynamic programming}, or more specifically,
  of the Least-Weight Subsequence (\LWS) problem: Given a sequence of
  $n$ data items together with weights for every pair of the items,
  the task is to determine a subsequence~$S$ minimizing the total
  weight of the pairs adjacent in $S$. A large number of natural
  problems can be formulated as \LWS problems, yielding obvious
  $\Oh(n^2)$-time solutions.

  In many interesting instances, the $\Oh(n^2)$-many weights can be
  succinctly represented. Yet except for near-linear time algorithms
  for some specific special cases, little is known about when an \LWS
  instantiation admits a subquadratic-time algorithm and when it does
  not. In particular, no lower bounds for \LWS instantiations have
  been known before. In an attempt to remedy this situation, we
  provide a general approach to study the fine-grained complexity of
  succinct instantiations of the LWS problem. In particular, given an
  \LWS instantiation we identify a highly parallel \emph{core} problem
  that is subquadratically \emph{equivalent}.  This provides either an
  explanation for the apparent hardness of the problem or an avenue to
  find improved algorithms as the case may be.

  More specifically, we prove subquadratic equivalences between the
  following pairs (an \LWS instantiation and the corresponding core
  problem) of problems: a low-rank version of \LWS and minimum inner
  product, finding the longest chain of nested boxes and vector
  domination, and a coin change problem which is closely related to
  the knapsack problem and \mpconv.  Using these equivalences and
  known $\SETH$-hardness results for some of the core problems, we
  deduce tight conditional lower bounds for the corresponding \LWS
  instantiations. We also establish the \mpconv-hardness of the
  knapsack problem. Furthermore, we revisit some of the \LWS
  instantiations which are known to be solvable in near-linear time
  and explain their easiness in terms of the easiness of the
  corresponding core problems.
\end{abstract}

\section{Introduction}
Dynamic programming (DP) is one of the most fundamental paradigms for
designing algorithms and a standard topic in textbooks on algorithms.
Scientists from various disciplines have developed DP formulations for
basic problems encountered in their applications.  However, it is not
clear whether the existing (often simple and straightforward) DP
formulations are in fact optimal or nearly optimal.  Our lack of
understanding of the optimality of the DP formulations is particularly
unsatisfactory since many of these problems are computational
primitives.

Interestingly, there have been recent developments regarding the
optimality of standard DP formulations for some specific problems,
namely, conditional lower bounds assuming the Strong Exponential Time
Hypothesis ($\SETH$)~\cite{ImpagliazzoPaturi01}. The longest common
subsequence (LCS) problem is one such problem for which almost tight
conditional lower bounds have been obtained recently. The LCS problem
is defined as follows: Given two strings $x$ and $y$ of length at most
$n$, compute the length of the longest string $z$ that is a
subsequence of both $x$ and $y$. The standard DP formulation for the
LCS problem involves computing a two-dimensional table requiring
$\Oh(n^2)$ steps. This algorithm is only slower than the fastest known
algorithm due to Masek and Paterson~\cite{MasekP80} by a
polylogarithmic factor. However, there has been no progress in finding
more efficient algorithms for this problem since the 1980s, which
prompted attempts as early as in 1976~\cite{AhoHu76} to understand the
barriers for efficient algorithms and to prove lower
bounds. Unfortunately, there have not been any nontrivial
unconditional lower bounds for this or any other problem in general
models of computation. This state of affairs prompted researchers to
consider \emph{conditional} lower bounds based on conjectures such as
3-Sum conjecture~\cite{GajentaanO95} and more recently based on
$\ETH$~\cite{ImpagliazzoPZ01} and $\SETH$~\cite{ImpagliazzoPaturi01}.
Researchers have found $\ETH$ and $\SETH$ to be useful to explain the
exact complexity of several $\NP$-complete problems (see the survey
paper~\cite{LokshtanovMS11}). Surprisingly, Ryan Williams~\cite{Williams05} has found a simple reduction from the $\cnfsat$
problem to the orthogonal vectors problem which under $\SETH$ leads to
a matching quadratic lower bound for the orthogonal vectors
problem. This in turn led to a number of conditional lower bound
results for problems in $\P$ (including LCS and related problems)
under $\SETH$~\cite{BackursI15,
  AbboudBVW15,BringmannK15,AbboudHVWW16,GaoIKW17}.  Also see
\cite{VassilevskaWilliams15} for a recent survey.

The DP formulation of the LCS problem is perhaps the conceptually
simplest example of a \emph{two-dimensional} DP formulation.  In the
standard formulation, each entry of an $n \times n$ table is computed
in constant time. The LCS problem belongs to the class of alignment
problems which, for example, are used to model similarity between gene
or protein sequences.  Conditional lower bounds have recently been
extended to a number of alignment
problems~\cite{Bringmann14,BackursI15,AbboudBVW15,BringmannK15, AbboudVWW14}.

In contrast, there are many problems for which natural 
quadratic-time DP formulations compute a \emph{one-dimensional} table  
of length $n$ by
spending $\Oh(n)$-time per entry. In this work, we investigate the
optimality of such DP formulations and obtain 
new (conditional) lower bounds which match the complexity of the standard
DP formulations.

\myparagraph{1-dimensional DP: The 
Least-Weight Subsequence (\LWS) Problem.} 
In this paper, we investigate the optimality of the standard DP formulation
of the \LWS problem.
A classic example of an \LWS problem
is airplane refueling~\cite{HirschbergL87}: Given airport
locations on a line, and a preferred distance per hop $k$ (in miles),
we define the penalty for flying $k'$ miles as $(k - k')^2$. The goal
is then to find a sequence of airports terminating at the last airport
that minimizes the sum of the penalties.
We now define the \LWS problem formally.

\begin{problem}[\LWS]
\label{prob:lws}
We are given a sequence of $n+1$ data items $x_0,\dots,x_n$,  
weights $w_{i,j} \in \{-W,\dots,W\} \cup \{\infty\}$ for every 
pair $i<j$ of indices where the weights may also be functions of 
the values  of  data items $x_i$, 
and an arbitrary function $g: \ints \to \ints$.  The \LWS problem is 
to determine $T[n]$ which is defined by the following DP formulation.
\begin{align}
T[0] & = 0, \nonumber\\
T[j] & = \min_{0\le i < j} g(T[i]) + w_{i,j} \qquad \text{for } j = 1,\dots,n. \label{eq:rec}
\end{align}
\end{problem}

To formulate  airplane refueling as an \LWS problem, 
we let $x_i$ be the location of the $i$'th airport, 
$g$ be the identity function, and
$w_{i,j} = (x_j - x_i - k)^2$.

In the definition of the \LWS problem, we  did not
specify the encoding of the problem (in
particular, the type of data items and the representation of
the weights $w_{i,j}$) so we can capture a larger 
variety of problems: it not only encompasses
classical problems such as the pretty printing problem due to Knuth and
Plass~\cite{KnuthP81}, the airplane refueling
problem~\cite{HirschbergL87} and the longest increasing subsequence
(LIS)~\cite{Fredman75}, 
but also the unbounded 
subset sum problem~\cite{Pisinger03, Bringmann17},
a more general coin change problem that is
effectively equivalent to the unbounded knapsack problem,
1-dimensional $k$-means clustering problem~\cite{GronlundLMN17}, 
finding longest $\rel$-chains (for an arbitrary binary relation $\rel$),
and many others 
(for a more complete list of problems definitions, see
Section~\ref{sec:prelim}). 

Under mild assumptions on the encoding of the data items and weights,
any instantiation of the \LWS problems
can be solved in time $\Oh(n^2)$ using~\eqref{eq:rec} 
for determining the values $T[j], j=1,\dots,n$ 
in
time $\Oh(n)$ each. 
However, the best known algorithms for the \LWS  problems differ
quite significantly in their time complexity.
Some problems including the pretty printing, airline
refueling and LIS turn out to be solvable in near-linear time, while
no subquadratic algorithms are known for  the
unbounded knapsack problem or for finding
the longest $\rel$-chain.

The main goal of the paper is to investigate the optimality of
the \LWS DP formulation for various problems by  proving
conditional lower bounds.

\myparagraph{Succinct \LWS instantiations.} 
In the extremely long presentation of an \LWS problem,
the weights $w_{i,j}$ are given
explicitly. This is however not a very interesting case 
from a computational point of view, 
as the standard DP formulation takes linear time (in the size of the
input) to compute $T[n]$. 
In the example of the airplane refueling problem the size of the input
is only $\Oh(n)$ assuming that the values of the data items are
bounded by some polynomial in $n$.  For such succinct representations,
we ask if the quadratic-time algorithm based on the standard \LWS DP
formulation is optimal.  Our approach is to study several natural
succinct versions of the \LWS problem (by specifying the type of data
items and the weight function\footnote{In all our applications, the
  function $g$ is the trivial identity function.}) and determine their
complexity.  We refer to Section \ref{sec:prelim} for 
examples of succinct instantiations of the \LWS problem.

\myparagraph{Our Contributions and Results.} 
The main contributions of our paper include a general
framework for reducing succinct \LWS instantiations
to what we call the \emph{core} problems
and proving subquadratic equivalences between them.
The subquadratic equivalences are 
interesting for two reasons. First, they allows us
to conclude conditional lower bounds for certain 
\LWS instantiations, where
previously no lower bounds are known. Second, subquadratic (or
more general fine-grained) equivalences are more useful since they let
us translate hardness as well as easiness results.

Our results include tight (up to subpolynomial factors) conditional lower
bounds for several \LWS instantiations with succinct
representations. These instantiations include the coin change problem,
low rank versions of the \LWS problem, and the longest subchain
problems. Our results are somewhat more general. We propose a
\emph{factorization} of the \LWS problem into a \emph{core} problem
and a fine-grained reduction from the \LWS problem to the core
problem. The idea is that core problems (which are often well-know
problems) capture the hardness of the \LWS problem and act as a
potential barrier for more efficient algorithms. While we do not
formally define the notion of a core problem, we identify several core
problems which share several interesting properties. For example,
they do not admit natural DP formulations and are easy to parallelize.
In contrast, the quadratic-time DP formulation of \LWS problems
requires the entries $T[i]$ to be computed in order, suggesting that
the general problem might be inherently sequential.

The reductions between LWS problems and core problems involve a
natural intermediate problem, which we call the \StaticLWS problem.
We first reduce the \LWS problem to the \StaticLWS problem in a
general way and then reduce the \StaticLWS problem to a core problem.
The first reduction is divide-and-conquer in nature and is inherently
sequential. The latter reduction is specific to the instantiation of
the LWS problem. The \StaticLWS problem is easy to parallelize and
does not have a natural DP formulation. However, the problem is not
necessarily a natural problem. The \StaticLWS problem can be thought
of as a generic core problem, but it is output-intensive.

In the other direction, we show that many of the core problems can be
reduced to the corresponding \LWS instantiations thus establishing
an equivalency between LWS instantiations and their core problems. 
This equivalence enables us to translate both the hardness and 
easiness results (i.e., the subquadratic-time algorithms) for the 
core problems to the corresponding LWS instantiations.

The first natural succinct representation of the \LWS problem we consider is the
low rank \LWS problem, where the weight matrix
$\weightmat = (w_{i,j})$ is of low rank and thus representable
as $\weightmat = L\cdot R$ where $L$ and $ R^\mathrm{T}$ are
$(n\times n^{o(1)})$-matrices. For this low rank \LWS problem,
we identify
the minimum inner product problem (\MinInnProd) as a suitable core
problem. 
It is only natural and not particularly surprising that
\MinInnProd can be reduced to the low-rank LWS problem which
shows the $\SETH$-hardness of 
the low-rank \LWS problem.  The other direction is
more surprising: Inspired by an elegant trick of Vassilevska Williams
and Williams~\cite{VassilevskaWW10}, we are able to show a
subquadratic-time reduction from the (highly sequential) low-rank
\LWS problem to the (highly parallel) \MinInnProd problem. Thus, the very
compact problem $\MinInnProd$ problem captures exactly the complexity of
the low-rank LWS problem (under subquadratic reductions).

We also show that the coin change problem is
subquadratically equivalent to the \mpconv problem. In the coin change  problem,
the weight matrix $\weightmat$ is succinctly given as a Toeplitz
matrix.
At this point, the conditional hardness of the \mpconv problem is unknown.
The quadratic-time hardness of the \mpconv problem would be very interesting,
since it is known that the \mpconv problem is reducible to the 3-sum
problem and the APSP problem, 
However, recent results give surprising subquadratic-time
algorithms for special cases of \mpconv~\cite{ChanL15}. If these subquadratic-time
algorithms extend to the general \mpconv problem, 
our equivalence
result also provides a subquadratic-time algorithm for the coin change problem
and the closely related 
unbounded knapsack problem. As a corollary, our reductions also give a
quadratic-time \mpconv-based lower bound for the bounded case of
knapsack.

We next consider
the problem of finding longest chains: here, we search for the longest subsequence
(chain) in the input sequence such that all adjacent pairs in the
subsequence are contained in some binary relation $\rel$.  
We show that for any binary
relation $\rel$ satisfying certain conditions the chaining problem is
subquadratically equivalent to a corresponding (highly parallel)
selection problem. As corollaries, we get equivalences between finding
the longest chain of nested boxes (\NestedBoxes) and \VectorDomination
as well as between
finding the longest subset chain (\SubsetChain) and the orthogonal
vectors (\OV) problem. Interestingly, these results have algorithmic
implications: known algorithms for low-dimensional vector domination
and low-dimensional orthogonal vectors translate to faster algorithms
for low-dimensional \NestedBoxes and \SubsetChain for small universe
size.

Table~\ref{tab:results} lists the \LWS succinct instantiations (as
discussed above) and their corresponding core problems.  All LWS
instantiations and core problems considered in this paper are formally
defined in Section~\ref{sec:prelim}.

Finally, we revisit classic problems including the longest increasing
subsequence problem, the unbounded subset sum problem and the
concave \LWS problem and analyze 
the \StaticLWS instantiations to immediately infer that the corresponding
core problem can be solved in
near-linear time. Table~\ref{tab:positiveResults} gives an overview of
some of the problems we look at in this context.

\begin{table}
\begin{tabular}{l|lll}
\textbf{Name} & \textbf{Weights} & \textbf{Equivalent Core} & \textbf{Reference} \\
\hline
\hline
Coin Change & Toeplitz matrix: & \mpconv & Theorem~\ref{thm:coinchange} \\
& $w_{i,j} = w_{j-i}$ & & \\
& \multicolumn{3}{l}{\textbf{Remark:} Subquadratically equivalent to \UnboundedKnapsack} \\
\hline
\hline
LowRankLWS & Low rank representation:& \MinInnProd & Theorem~\ref{thm:lrlws} \\%& \\
& $w_{i,j} = \langle \sigma_i, \mu_j \rangle$ & & \\
\hline
\hline
$\rel$-chains & matrix induced by $\rel$: & $\Selection(\rel)$ & Theorem~\ref{thm:LWStoSel} \\
& $w_{i,j} = w_j$ if $\rel(x_i,x_j)$ and $\infty$ o/w & & Theorem~\ref{thm:SelToLWS}\\
& \multicolumn{3}{l}{\textbf{Remark:} Result below are corollaries.} \\
\hline
NestedBoxes & $w_{i,j} = -1$ if $B_j$ contains $B_i$ & \textsc{VectorDomination} & \\
SubsetChain & $w_{i,j} = -1$ if $S_i \subseteq S_j$ & \textsc{OrthogonalVectors} & \\
\end{tabular}
\caption{Summary of our results}
\label{tab:results}
\end{table}

\begin{table}
\begin{tabular}{l|llll}
\textbf{Name} & \textbf{Weights} & \textbf{$\tOh(n)$-time reducible to} & \textbf{Reference}\\
\hline
\hline
Longest Increasing & matrix induced by $\rel_<$: & \Sorting & \cite{Fredman75},\\
 Subsequence & $w_{i,j} = -1$ if $x_i < x_j$ & &   Observation~\ref{obs:lis} \\
\hline
Unbounded Subset & Toeplitz $\{0,\infty\}$ matrix: & \convtext & \cite{Bringmann17}, \\
 Sum & $w_{i,j} = w_{j-i} \in \{0, \infty\}$ & & Observation~\ref{obs:subsetsum} \\
\hline
Concave 1-dim. DP & concave matrix: & SMAWK problem & \cite{HirschbergL87, GalilP90,Wilber88}, \\
& $w_{i,j} + w_{i',j'} \le w_{i',j} + w_{i, j'}$ & & Observation~\ref{obs:concavelws}  \\
& for $i\le i'\le j \le j'$ & & \\
\end{tabular}
\caption{Near-linear time algorithms following from the proposed framework.}
\label{tab:positiveResults}
\end{table}

\myparagraph{Related Work.}  LWS has been introduced by Hirschberg and
Lamore~\cite{HirschbergL87}. If the weight function satisfies the
\emph{quadrangle inequality}\footnote{See Section~\ref{sec:prelim} for
  definitions.} formalized by Yao~\cite{Yao80}, one obtains the
\emph{concave LWS} problem, for which they give an
$\Oh(n \log n)$-time algorithm. Subsequently, improved algorithms
solving concave LWS in time $\Oh(n)$ were given~\cite{Wilber88,
  GalilP90}. This yields a fairly large class of weight functions
(including, e.g., the pretty printing and airplane refueling problems)
for which linear-time solutions exist. To generalize this class of
problems, further works address convex weight functions\footnote{A
  weight function is convex if it satisfies the inverse of the
  quadrangle inequality.}~\cite{GalilG89, MillerM88,KlaweK90} as well
as certain combinations of convex and concave weight
functions~\cite{Eppstein90} and provide near-linear time
algorithms. For a more comprehensive overview over these algorithms
and further applications of the LWS problem, we refer the reader to
Eppstein's PhD thesis~\cite{EppsteinPhD}.

Apart from these notions of concavity and convexity, results on the
succinct LWS problems are typically more scattered and
problem-specific (see, e.g.,
\cite{Fredman75,KnuthP81,Bringmann17,GronlundLMN17}; furthermore, a
closely related recurrence to \eqref{eq:rec} pops up when solving
bitonic TSP~\cite{deBergBJW16}). An exception to this rule is a study
of the parallel complexity of LWS~\cite{GalilP94}.

{\myparagraph{Organization.}}  Section \ref{sec:lrlws} contains the
result on low-rank \LWS. This is also where we formally introduce
\StaticLWS. Section \ref{sec:coinchange} proves the subquadratic
equivalence of the coin change problem and \mpconv, while Section
\ref{sec:chains} discusses chaining problems and their corresponding
selection (core) problem. Our results
on near-linear time algorithms are given in Section \ref{sec:nearlinear}.

\section{Preliminaries}
\label{sec:prelim}

In this section, we state our notational conventions and list the main problems considered in this work.

Problem $A$ \emph{subquadratically reduces} to problem $B$, denoted
$A \lequad B$, if for any $\varepsilon >0$ there is a $\delta > 0$
such that an algorithm for $B$ with time $\Oh(n^{2-\varepsilon})$
implies an algorithm for $A$ with time $\Oh(n^{2-\delta})$. We call
the two problems subquadratically equivalent, denoted $A \eqquad B$, if
there are subquadratic reductions both ways.

We let $[n]:=\{1,\dots,n\}$.
When stating running time, we use the notation $\tOh(\cdot)$ to hide polylogarithmic factors. For a problem $P$, we write $T^P$ for its time complexity.
We generally assume the word-RAM model of computation with word size $w = \Theta(\log n)$. For most problems defined in this paper, we consider inputs to be integers in the range
$\{-W, \ldots, W\}$ where $W$ fits in a constant number of
words\footnote{For the purposes of our reductions, even values up to
  $W= 2^{n^{o(1)}}$ would be fine.}. For vectors, we use $d$ for the
dimension and generally assume $d=n^{o(1)}$.

\myparagraph{Core Problems and Hypotheses.}
One of the most popular problems in the field of quadratic-time conditional hardness is the following problem.

\begin{problem}[Orthogonal Vectors (\OV)]
  Given $a_1, \dots, a_n, b_1,\dots,b_n \in \{0,1\}^d$, determine
  if there is a pair $i,j$ satisfying $\langle a_i,b_j \rangle =
  0$.
\end{problem}

Recall that for \OV (and the related problems below) we assume $d=n^{o(1)}$. Thus the naive algorithm solves \OV in time $\Oh(n^2 \cdot d) = \Oh(n^{2+o(1)})$.

One of the reasons for the popularity of \OV is its surprising connection to the Strong Exponential Time Hypothesis ($\SETH$)~\cite{ImpagliazzoPaturi01}: 
It states that for every $\varepsilon > 0$ there
is a $k$, such that the $k$-SAT problem requires time
$\Omega(2^{(1-\varepsilon) n})$.  
By an elegant reduction due to Williams~\cite{Williams05}, \OV is quadratic-time $\SETH$-hard, i.e., there is no algorithm with running time
time $\Oh(n^{2-\varepsilon})$ for any $\varepsilon >0$ unless $\SETH$ is false.

We consider the following generalizations of \OV.

\begin{restatable}[\MinInnProd]{problem}{MinInnProdProb}
  Given $a_1, \dots, a_n, b_1,\dots,b_n \in \{-W,\dots,W\}^d$ and a
  natural number $r\in \ints$, determine if there is a pair $i,j$
  satisfying $\langle a_i,b_j \rangle \le r$.
\end{restatable}

\begin{problem}[\AllInnProd]
  Given $a_1,\dots, a_n \in \{-W,\dots,W\}^d$ and
  $b_1, \dots, b_n \in \{-W,\dots,W\}^d$, determine \emph{for all
    $j\in [n]$}, the value $\min_{i\in[n]} \langle a_i, b_j
  \rangle$.
\end{problem}

\begin{problem}[\VectorDomination]
  Given $a_1, \dots, a_n, b_1,\dots,b_n \in \{-W,\dots,W\}^d$
  determine if there is a pair $i,j$ such that $a_i \leq b_j$
  component-wise.
\end{problem}

\begin{problem}[\SetContainment]
  Given sets $a_1, \dots, a_n, b_1,\dots,b_n \subseteq [d]$ given as
  vectors in $\{0,1\}^d$
  determine if there is a pair $i,j$ such that $a_i \subseteq b_j$.
\end{problem}

Note that \SetContainment is a special case of \VectorDomination and
computationally equivalent to \OV, as $\langle a, b \rangle = 0$ if
and only if $a \subseteq \overline{b}$ (in this slight misuse of
notation we think of the Boolean vectors $a,b$ as sets and let $\bar{b}$ denote the complement of $b$).

Since subquadratic solutions to any of these problems trivially give a
subquadratic solution to $\OV$, these problems are also quadratic-time
$\SETH$-hard. However, the converse does not necessarily hold. In
particular, the strongest currently known upper bounds differ: while
for $\OV$ and \SetContainment for small dimension $d=c \cdot \log(n)$, an $n^{2-1/\Oh(\log c)}$-time
algorithm is known~\cite{AbboudWY15}, for \VectorDomination the best
known algorithm runs only in time
$n^{2-1/\Oh( c \log^2 c)}$~\cite{ImpagliazzoLPS14,Chan15}.

Another fundamental quadratic-time problem is \mpconv, defined below.

\begin{restatable}[\mpconv]{problem}{mpconvproblem}
  Given vectors $a = (a_0,\dots, a_{n-1})$, $b=(b_0, \dots, b_{n-1}) \in \{-W, \dots, W\}^n$, determine its \mpconv $a\conv b$ defined by
  \[ (a\conv b)_k = \min_{0 \le i, j < n: i+j = k} a_i + b_j \qquad \text{for all } 0\le k \le 2n-2. \]
\end{restatable}

As opposed to the classical convolution, which we denote as $a\classicconv b$, solvable in time $\Oh(n\log n)$ using FFT, no strongly subquadratic algorithm for \mpconv is known. Compared to \OV, we have less support for believing that no $\Oh(n^{2-\varepsilon})$-time algorithm for \mpconv exists. In particular, interesting special cases can be solved in subquadratic-time~\cite{ChanL15} and there are subquadratic-time co-nondeterministic and nondeterministic algorithms~\cite{BremnerCDEHILPT14, CarmosinoGIMPS16}. At the same time, breaking this long-standing quadratic-time barrier is a prerequisite for progress on refuting the 3SUM and APSP conjectures. This makes it an interesting target particularly for proving subquadratic \emph{equivalences}, since both positive and negative resolutions of this open question appear to be reasonable possibilities.

\myparagraph{Succinct LWS Versions and Applications.}
In the definition of LWS (Problem~\ref{prob:lws}) we did not fix the encoding of the problem (in particular, the choice of data items, as well the representation of the weights $w_{i,j}$ and the function $g$).
Assuming that $g$ can be determined in $\tOh(1)$ and that $W = \poly(n)$, this problem can naturally be solved in time $\tOh(n^2)$, by evaluating the central recurrence \eqref{eq:rec} for each $j=1,\dots,n$ -- this takes $\tOh(n)$ time for each $j$, since we take the minimum over at most $n$ expressions that can be evaluated in time $\tOh(1)$ by accessing the previously computed entries $T[0], \dots, T[j-1]$ as well as computing $g$.
In all our applications, $g$ will be the identity function, hence it will suffice to define the type of data items and the corresponding weight matrix. Throughout this paper,  whenever we fix a representation of the weight matrix $\weightmat = (w_{i,j})_{i,j}$, we denote the corresponding problem $\LWS(\weightmat)$. 

In the remainder of this section, we list problems considered in this paper that can be expressed as an LWS instantiations. At this point, we typically give the most natural formulations of these problems -- the corresponding definitions as LWS instantiations are given in the corresponding sections.

We start off with a natural succinct ``low-rank'' version of \LWS.

\begin{problem}[\LRLWS]
  \LRLWS is the \LWS problem where the weight matrix $\weightmat$ is
  of rank $d \ll n$. The input is given succinctly as two matrices $A$
  and $B$, which are $(n\times d)$- and $(d\times n)$-matrices
  respectively, and $\weightmat = A\cdot B$. 
\end{problem}

Alternatively, \LRLWS may be interpreted in the following way: There are places $0, 1, \dots, n$, each of which is equipped with an in- and an out-vector. The cost of going from place $i$ to $j$ is then defined as the inner product of the out-vector of $i$ with the in-vector of $j$, and the task is to compute the minimum-cost monotonically increasing path to reach place $n$ starting from 0. In Section~\ref{sec:lrlws}, we prove subquadratic equivalence to \MinInnProd.

We consider the following coin change problem and variations of \Knapsack.

\begin{problem}[\CoinC]
  We are given a weight sequence $w = (w_1, \dots, w_n)$ with
  $w_i \in \{-W, \dots, W\} \cup \{\infty\}$, i.e., the coin with value
  $i$ has weight $w_i$. Find the weight of the multiset of
  denominations $I$ such that $\sum_{i\in I} i= n$ and the sum of the
  weights $\sum_{i \in I} w_i$ is minimized.
\end{problem}

\begin{restatable}[\UKnapsack]{problem}{UKnapsackproblem}
  We are given a sequence of profits $p = (p_1, \dots, p_n)$ with
  $p_i \in \{0, 1, \dots, W\}$, i.e., the item of size~$i$
  has profit $p_i$. Find the total profit of the multiset of indices~$I$
  such that $\sum_{i\in I} i \leq n$ and the total profit
  $\sum_{i \in I} p_i$ is maximized.
\end{restatable}
Note that if we replace multiset by set in the above definition, we obtain the bounded version of the problem, which we denote by $\Knapsack$.

We remark that our perspective on \CoinC and \UKnapsack (as well as \USubsetSum below) using LWS is slightly different than many classical accounts of Knapsack: We define the problem size as the budget size instead of the number of items, thus our focus is on pseudo-polynomial time algorithms for the typical formulations of these problems.

Note that we state the coin change problem as allowing positive or
negative weights, but \UKnapsack only allows for positive
profits. Furthermore, \CoinC is a minimization problem, while
\UKnapsack is a maximization problem. For \CoinC, the maximization
problem is trivially equivalent as we can negate all weights. Furthermore, we can
freely translate the range of the weights in the coin change problem by
defining $w'_i = i\cdot M + w_i$ for all $i$ and sufficiently large or small $M$. The
most significant difference between \CoinC and \UKnapsack is that for
\CoinC the indices have to sum to exactly $n$, while for \UKnapsack
$n$ is only an upper bound.

We will encounter an important generalization of the two problems above, defined as follows.

\begin{problem}[\oiCoinC] 
  The output-intensive version of \CoinC is to determine, given an
  input to $\CoinC$, the weight of the optimal multiset such that the
  denominations sum up to $j$ \emph{for all $1 \leq j \leq n$}.
\end{problem}

It is easy to see that \oiCoinC is at least as hard as both \CoinC and
\UKnapsack. We will relate the above \Knapsack variants to \mpconv in Section~\ref{sec:coinchange}.

In Section~\ref{sec:nearlinear}, we will revisit near-linear time algorithms for the following special case of the coin change problem.

\begin{problem}[\USubsetSum]
  Given a subset $S\subseteq[n]$, determine whether there is a multiset of elements of $S$ that sums up to exactly $n$.
\end{problem}

We also discuss problems where the goal is to find the
longest chain among data items, where the notion of a chain is defined by some binary relation $\rel$.
We first give the definition of the general problem which is parameterized by $\rel$.

\begin{problem}[\CLWS]
  Fix a set $X$ of objects and a relation $\rel \subseteq X\times X$. The Weighted Chain Least-Weight
  Subsequence Problem for $\rel$, denoted $\CLWS(\rel)$, is the following problem: Given data items $x_0,\ldots,x_n\in X$, weights
  $w_1, \ldots, w_{n-1} \in \{-W,\ldots,W\}$, find
  the weight of the increasing sequence
  $i_0 = 0 < i_1 < i_2 < \ldots < i_k = n$ such that for all $j$ with
  $1 \leq j \leq k$ the pair $(x_{i_{j-1}},x_{i_j})$ is in the
  relation $\rel$ and the weight $\sum_{j=1}^{k-1} w_{i_j}$ is
  minimized.
\end{problem}

The following problems are specializations of this problem for different relations. 

\begin{problem}[\NestedBoxes]
  Given $n$ boxes in $d$ dimensions, given as non-negative,
  $d$-dimensional vectors $(b_1, \ldots, b_n)$, find the longest chain
  such that each box fits into the next (without rotation). We say box
  that box $a$ fits into box $b$ if for all dimensions
  $1 \leq i \leq d$, $a_i \leq b_i$.
\end{problem}

\begin{problem}[\SubsetChain]
  Given $n$ sets from a universe $U$ of size $d$, given as Boolean,
  $d$-dimensional vectors $(b_1, \ldots, b_n)$, find the longest chain
  such that each set is a subset of the next.
\end{problem}

Note that \SubsetChain is a special case of \NestedBoxes.

\begin{problem}[\LIS]
  Given a sequence of $n$ integers $x_1,\dots,x_n$, compute the length of the longest subsequence that is strictly increasing.
\end{problem}

Finally, we will briefly discuss the following class of LWS problems that turn out to be solvable in near-linear time.

\begin{problem}[\ConcaveLWS]
  Given an LWS instance in which the weights satisfy the quadrangle inequality
\[ w_{i,j} + w_{i',j'} \le w_{i',j} + w_{i, j'}  \qquad \text{for } i\le i'\le j \le j', \]
  solve it. The weights are not explicitly given, but each $w_{i,j}$ can be queried in constant time.
\end{problem}

\section{\LRLWS}
\label{sec:lrlws}

Let us first analyze the following canonical succinct representation of a low-rank weight matrix $\weightmat = (w_{i,j})_{i,j}$: If $\weightmat$ is of rank $d \ll n$, we can write it more succinctly as $\weightmat = A\cdot B$, where $A$ and $B$ are $(n\times d)$-  and $(d\times n)$ matrices, respectively. We can express the resulting natural LWS problem equivalently as follows.

\begin{problem}[\LRLWS]  We define the following \LWS instantiation $\LRLWS = \LWS(\weightmat_\lowrank)$.\\
\Data{out-vectors $\mu_0, \dots, \mu_{n-1}\in \{-W,\dots,W\}^d$, in-vectors $\sigma_1, \dots, \sigma_n \in \{-W,\dots,W\}^d$} 
\Weights{$w(i,j) = \langle \mu_i, \sigma_j \rangle$ for $0\le i < j \le n$}
\end{problem}

In this section, we show that this problem is equivalent, under subquadratic reductions, to the following \emph{non-sequential} problem.

\begin{problem}[\MinInnProd]
Given $a_1, \dots, a_n, b_1,\dots,b_n \in \{-W,\dots,W\}^d$ and a natural number $r\in \ints$, determine if there is a pair $i,j$ satisfying $\langle a_i,b_j \rangle \le r$.
\end{problem}

We first give a simple reduction from \MinInnProd that along the way
proves quadratic-time SETH-hardness of \LRLWS.

\begin{lemma}
  \label{lem:LRLWS-lb}
  It holds that $T^\MinInnProd(n,d,W) \le T^\LRLWS(2n+1, d+2,dW) + \Oh(nd)$.
\end{lemma}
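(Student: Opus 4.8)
The plan is to reduce an arbitrary \MinInnProd instance directly to a single \LRLWS instance on $2n+1$ items, arranged so that the value $T[2n]$ produced by the \LWS recurrence equals $\min_{i,j}\langle a_i,b_j\rangle$; comparing $T[2n]$ with $r$ then answers \MinInnProd. I lay out the positions $0,1,\dots,2n$ in three groups: a source $0$, an ``$a$-layer'' at positions $1,\dots,n$ (position $i$ representing $a_i$), and a ``$b$-layer'' at positions $n+1,\dots,2n$ (position $n+j$ representing $b_j$), so that the target $2n$ coincides with $b_n$. Building $2n+1$ vectors of dimension $d+2$ costs $\Oh(nd)$, matching the additive term.

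The idea is to choose out-vectors $\mu$ and in-vectors $\sigma$ in dimension $d+2$, where the first $d$ coordinates carry data and the last two are control coordinates, so that the only cheap hops are source $\to$ $a$-layer (cost $0$), $a$-layer $\to$ $b$-layer (cost exactly $\langle a_i,b_j\rangle$), and $b$-layer $\to$ later $b$-layer (cost $0$). Concretely, I put $a_i$ into the data part of the out-vector of $i$ and $b_j$ into the data part of the in-vector of $n+j$, and zero the data part of the $a$-layer in-vectors and of the $b$-layer out-vectors; this already makes an $a$-to-$b$ hop contribute $\langle a_i,b_j\rangle$ and a $b$-to-$b$ hop contribute $0$ in the data coordinates. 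The two control coordinates then keep the intended hops at those costs and attach a large penalty $P$ to the two unintended forward hop types (source $\to$ $b$-layer and $a$-layer $\to$ later $a$-layer). A short check shows a constant number of control vectors suffices: e.g.\ taking the $b$-layer in-control $=(dW,0)$, the $a$-layer out- and in-control $=(0,dW)$, the source out-control $=(dW,0)$, and the $b$-layer out-control $=(0,0)$ realizes exactly these costs with all entries in $\{-dW,\dots,dW\}$ and penalty $P=d^2W^2$.

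With this gadget, any increasing path from $0$ to $2n$ uses at most one $a$-to-$b$ hop (indices only increase, so the $b$-layer cannot be left), and a penalty-free path must have the form $0 \to i \to (n+j) \to \cdots \to 2n$ of total cost $\langle a_i,b_j\rangle$; such a path always exists, so the penalty-free optimum is exactly $\min_{i,j}\langle a_i,b_j\rangle$. I would then conclude $T[2n]=\min_{i,j}\langle a_i,b_j\rangle$ by a magnitude comparison: every penalty path costs at least $P-dW^2$ while every penalty-free path costs at most $dW^2$, so for $P=d^2W^2$ and $d\ge 2$ no penalty path is optimal (the degenerate case $d=1$ is handled directly or by padding to $d=2$). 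After disposing of the trivial thresholds $r>dW^2$ and $r<-dW^2$ separately, the reduction answers ``yes'' iff $T[2n]\le r$.

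I expect the main obstacle to be precisely this magnitude bookkeeping. The control coordinates must simultaneously enforce all the structural constraints \emph{and} supply a penalty that dominates any combination of a wrong hop with a maximally negative $\langle a_i,b_j\rangle$ hop, all while respecting the entry bound $dW$ — which forces the penalty to be obtained as a product $dW\cdot dW$ rather than as a single oversized entry. Verifying that no clever multi-hop path circumvents the penalties, using that the $b$-layer is a sink from which only free forward hops are possible, is the one place where a careful case analysis over the hop types is genuinely needed.
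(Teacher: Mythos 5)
Your construction is correct and is essentially the same reduction as the paper's: a three-layer gadget (source, $a$-layer carrying $a_i$ in the out-vectors, $b$-layer carrying $b_j$ in the in-vectors) with two control coordinates bounded by $dW$ that make source-to-$a$ and $b$-to-$b$ hops free, $a$-to-$b$ hops cost $\langle a_i,b_j\rangle$, and unwanted hops cost $(dW)^2$, so that the table value at the final position equals $\min_{i,j}\langle a_i,b_j\rangle$. The only differences are cosmetic — the paper adds an explicit sink item $2n+1$ and lets $a$-to-$a$ hops be free (so $T[i]=0$ on the $a$-layer) rather than penalized, while you let the free $b$-to-$b$ hops carry the minimum to position $2n$ — and your magnitude bookkeeping (penalty $(dW)^2$ versus extremes $\pm dW^2$) matches the paper's.
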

\begin{proof}
Given $a_1,\dots,a_n,b_1,\dots,b_n \in \{-W,\dots,W\}^d$, let $\allzero = (0,\dots,0)\in \ints^d$ be the all-zeroes vector and define the following in- and out-vectors
\begin{align*}
\mu_0 & = (dW, 0, \allzero),  & \sigma_{2n+1} &= (dW, dW, \allzero), & \\
\mu_i & = (0, dW,  a_i),  & \sigma_i &= (0, 0, \allzero), & \text{for } i=1,\dots,n, \\
\mu_{n+j} & = (0, 0, \allzero),  & \sigma_{n+j} &= (dW, 0,  b_j), & \text{for } j=1,\dots,n.
\end{align*}
To prove correctness, we show that in the constructed \LRLWS instance, 
we have $T[2n+1] = \min_{i,j} \langle a_i, b_j \rangle$, 
from which the results follows immediately. 
Inductively, we have $T[i] = 0$ 
for $i=1,\dots,n$, since $\langle \mu_{i'},\sigma_i\rangle = 0$ 
for all $0 \le i' < i \le n$. Similarly, for $j=1,\dots,n$ 
one can  inductively show  that 
$T[n+j] = \min_{1\le i \le n, j'\le j} \langle a_i,b_{j'} \rangle$, 
using that $\langle \mu_0, \sigma_{n+j}\rangle = (dW)^2 \ge \max_{i,j} \langle a_i,b_j \rangle$, 
$\langle \mu_i, \sigma_{n+j} \rangle = \langle a_i, b_j \rangle$ and 
$\langle \mu_{n+j'}, \sigma_{n+j}\rangle = 0$  
for all $1\le i,j \le n$ and $j'\le j$. 
Finally, using (1) 
$\langle \mu_0, \sigma_{2n+1} \rangle = (dW)^2 \ge \max_{i,j} \langle a_i,b_j \rangle$ 
and $T[0]=0$, 
(2) $\langle \mu_i, \sigma_{2n+1} \rangle = (dW)^2 \ge \max_{i,j} \langle a_i, b_j \rangle$ 
and $T[i] = 0$ for $i=1,\dots,n$ 
and (3) $\langle \mu_{n+j}, \sigma_{2n+1}\rangle = 0$ 
and $T[n+j] = \min_{1\le i \le n, 1\le j' \le j} \langle a_i, b_{j'} \rangle$ 
for all $j=1,\dots,n$, 
we can finally determine $T[2n+1] = \min_{i,j} \langle a_i, b_j \rangle$.
\end{proof}

To prove the other direction, we will give a quite general approach to
compute the sequential \LWS problem by reducing to a natural static
subproblem of \LWS:

\begin{problem}[$\StaticLWS(\weightmat)$]
Fix an instance of $\LWS(\weightmat)$. Given intervals $I:=\{a+1, \dots, a+N\}$ and $J:=\{a+N+1,\dots, a+2N\}$, together with the correctly computed values $T[a+1],\dots,T[a+N+1]$, the Static Least-Weight Subsequence Problem (\StaticLWS) asks to determine
\begin{align*}
T'[j] & := \min_{i \in I} T[i] + w_{i,j}  & \text{for all } j\in J. 
\end{align*}
\end{problem}

\begin{lemma}[$\LWS(\weightmat) \lequad \StaticLWS(\weightmat)$]\label{lem:dynToStatic}
For any choice of $\weightmat$, if $\StaticLWS(\weightmat)$ can be solved in time $\Oh(N^{2-\varepsilon})$ for some $\varepsilon>0$, then $\LWS(\weightmat)$ can be solved in time $\tOh(n^{2-\varepsilon})$.
\end{lemma}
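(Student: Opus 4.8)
### Proof Proposal

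The plan is to solve $\LWS(\weightmat)$ via a divide-and-conquer strategy over the index range $\{1,\dots,n\}$, using $\StaticLWS(\weightmat)$ as the subroutine that handles the interaction between a block of already-computed entries and a block of entries still to be determined. The central difficulty is that the recurrence~\eqref{eq:rec} is inherently sequential: to compute $T[j]$ correctly we must have already computed $T[i]$ for every $i<j$. A naive split of the interval does not work directly, because the entries in the right half depend not only on each other but also on all entries in the left half. The key structural observation is that these two kinds of dependency can be separated: the ``cross'' contributions from the left half to the right half form exactly an instance of $\StaticLWS$, while the ``internal'' dependencies within each half can be handled recursively.

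Concretely, I would define a recursive procedure $\mathrm{Solve}(a, N)$ that, assuming $T[a+1],\dots,T[a+N]$ depend correctly on $T[0],\dots,T[a]$ already being computed, fills in the correct values of $T[a+1],\dots,T[a+2N]$. Wait — more carefully, the natural formulation is: assume $T[0],\dots,T[a]$ are already finalized, and the goal is to finalize $T[a+1],\dots,T[a+2N]$. First I would recurse on the left half to finalize $T[a+1],\dots,T[a+N]$ (which only requires $T[0],\dots,T[a]$, already available, together with internal dependencies handled by the recursion). Then, with $I = \{a+1,\dots,a+N\}$ and $J=\{a+N+1,\dots,a+2N\}$ and the now-correct values $T[a+1],\dots,T[a+N+1]$ in hand, I would invoke $\StaticLWS(\weightmat)$ to compute the partial minima
\[
T'[j] = \min_{i\in I} T[i] + w_{i,j} \qquad \text{for all } j\in J,
\]
which captures precisely the contribution of the left-half indices to the right-half entries. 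Finally I would recurse on the right half to finalize $T[a+N+1],\dots,T[a+2N]$, where each such entry combines the already-computed cross contribution $T'[j]$ with the internal contributions from within $J$. To integrate $T'[j]$ cleanly into the right recursion, I would fold it into the base/initialization of that subproblem, so that when the right recursion takes its own minima it also accounts for the precomputed left-half contributions (for instance by initializing the running best value of each $T[j]$, $j\in J$, to $T'[j]$ before the right recursion begins).

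For the correctness argument, I would prove by induction on the recursion that when $\mathrm{Solve}(a,N)$ returns, every entry $T[a+1],\dots,T[a+2N]$ equals $\min_{0\le i<j} g(T[i])+w_{i,j}$, using the invariant that $T[0],\dots,T[a]$ are correct on entry. The inductive step splits the minimizing index $i$ of each $T[j]$ into the three regimes $i\le a$, $i\in I$, and $i\in J$: the first is covered by the hypothesis plus the left recursion, the second by the $\StaticLWS$ call, and the third by the right recursion. For the running time, letting $R(N)$ denote the cost of handling a block of size $2N$, the recurrence is $R(N) = 2R(N/2) + \Oh(N^{2-\varepsilon})$ plus lower-order bookkeeping, since the two recursive halves are on blocks of size $N$ and the $\StaticLWS$ call on intervals of size $N$ costs $\Oh(N^{2-\varepsilon})$ by assumption. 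Since $2-\varepsilon > 1$, the cost is dominated by the top level, giving $R(n) = \Oh(n^{2-\varepsilon})$; the $\tOh$ in the statement absorbs the logarithmic recursion depth and any polylogarithmic overhead in evaluating $g$ and the weights.

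The main obstacle I anticipate is not the recursion itself but making the separation of dependencies airtight: I must be careful that at the moment $\StaticLWS$ is invoked, the values $T[i]$ for $i\in I$ are genuinely final (not merely partial minima), which is why the left recursion must complete before the $\StaticLWS$ call, and that the cross contributions $T'[j]$ are correctly threaded into the right recursion without being overwritten or double-counted. Handling $g$ in full generality (rather than the identity) requires only that we store $g(T[i])$ rather than $T[i]$ when feeding the $\StaticLWS$ instance, which is a cosmetic change; I would remark on this but carry out the argument for the identity case matching the paper's applications.
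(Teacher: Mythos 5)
Your proposal is correct and follows essentially the same route as the paper: a divide-and-conquer over the index range in which the left half is finalized recursively, a single $\StaticLWS$ call computes the cross contributions from $I$ to $J$, these are folded into the partial minima carried into the right recursion (the paper does this by explicitly passing values $t_k=\min_{0\le k'<i}T[k']+w_{k',k}$ and updating $t'_k\gets\min\{t_k,T'[k]\}$, which also threads through the contributions from indices $\le a$ that your initialization sketch glosses over slightly), and the running time follows from $R(N)=2R(N/2)+\Oh(N^{2-\varepsilon})$. No substantive difference from the paper's proof.
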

\begin{proof}
In what follows, we fix $\LWS$ as $\LWS(\weightmat)$ and $\StaticLWS$ as $\StaticLWS(\weightmat)$.

We define the subproblem $S(\{i,\dots,j\}, (t_i,\dots,t_j))$ that given an interval spanned by $1\le i \le j \le n$ and values $t_k = \min_{0\le k' < i} T[k'] + w_{k',k}$ for each point $k \in \{i,\dots, j\}$, computes all values $T[k]$ for $k \in \{i,\dots,j\}$. Note that a call to $S([n], (w_{0,1}, \dots, w_{0, n}))$ solves the LWS problem, since $T[0] = 0$ and thus the values of $t_k, k \in [n]$ are correctly initialized.

We solve $S$ using Algorithm~\vref{alg:staticlws}.
\begin{algorithm}
\caption{Reducing \LWS to \StaticLWS}
\label{alg:staticlws}
\begin{algorithmic}[1]
\Function{$S$}{$\{i,\dots,j\}, (t_i,\dots, t_j)$}
   \If{$i=j$}
     \State \Return $T[i] \gets t_i$
   \EndIf
   \State $m \gets \lceil \frac{j-i}{2} \rceil$
   \State $(T[i], \dots, T[i+m-1]) \gets S(\{i,\dots,i+m-1\}, (t_i, \dots, t_{i+m-1}))$ \label{line:firstcall}
   \State solve $\StaticLWS$ on the subinstance given by $I:= \{i, \dots, i+m-1\}$ and $J:=\{i+m, \dots, i + 2m - 1\}$:
   \State \Comment{obtains values $T'[k] = \min_{i\le k' < i + m} T[k'] + w_{k',k}$ for $k=i+m,\dots,i + 2m  - 1$.}
   \State $t'_k \gets \min\{t_k, T'[k]\}$ for all $k = i+m, \dots, i + 2m - 1$.
   \State $(T[i+m],\dots,T[i+2m - 1]) \gets S(\{i+m,\dots,i + 2m - 1\}, (t'_{i+m}, \dots, t'_{i+2m- 1}))$ \label{line:secondcall}
   \If{$j = i + 2m$}
	\State $T[j] := \min \{t_j, \min_{ i \le k < j} T[k] + w_{k,j}\}$.
   \EndIf
   \State \Return $(T[i], \dots, T[j])$
\EndFunction
\end{algorithmic}
\end{algorithm}
We briefly argue correctness, using the invariant that $t_k = \min_{0\le k' < i} T[k'] + w_{k',k}$ in every call to $S$. If $S$ is called with $i=j$, then the invariant yields $t_i = \min_{0 \le k' < i} T[k'] + w_{k',i} = T[i]$, thus $T[i]$ is computed correctly. For the call in Line~\ref{line:firstcall}, the invariant is fulfilled by assumption, hence the values $(T[i],\dots,T[i+m-1])$ are correctly computed. For the call in Line~\ref{line:secondcall}, we note that for $k=i+m,\dots,i+2m-1$, we have  \[t'_k = \min \{ t_k, T'[k] \} = \min \{ \min_{0\le k' < i} T[k'] + w_{k',k}, \min_{i \le k' < i+ m} T[k'] + w_{k',k}\} = \min_{0\le k' < i+ m} T[k'] + w_{k',k}.\] 
Hence the invariant remains satisfied. Thus, the values $(T[i+m],\dots,T[i+2m-1])$ are correctly computed. Finally, if $j=i+2m$, we compute the remaining value $T[j]$ correctly, since $t_j = \min_{0\le k < i} T[k] + w_{k,j}$ by assumption.

To analyze the running time $T^S(n)$ of $S$ on an interval of length $n:=j-i+1$, note that each call results in two recursive calls of interval lengths at most $n/2$. In each call, we need an additional overhead that is linear in $n$ and $T^\StaticLWS(n/2)$. Solving the corresponding recursion $T^S(n) \le 2T^S(n/2) + T^\StaticLWS(n/2) + \Oh(n)$, we obtain that an $\Oh(N^{2-\varepsilon})$-time algorithm \StaticLWS, with $0<\varepsilon<1$ yields $T^\LWS(n) \le  T^S(n) = \Oh(n^{2-\varepsilon})$. Similarly, an $\Oh(N \log^c N)$-time algorithm for \StaticLWS would result in an $\Oh(n \log^{c+1} n)$-time algorithm for \LWS.
\end{proof}

For the special case of $\LRLWS$, it is straightforward to see that the static version boils down to the following natural reformulation.

\begin{problem}[\AllInnProd]
Given $a_1,\dots, a_n \in \{-W,\dots,W\}^d$ and $b_1, \dots, b_n \in \{-W,\dots,W\}^d$, determine \emph{for all $j\in [n]$}, the value $\min_{i\in[n]} \langle a_i, b_j \rangle$. (Again, we typically assume that $d=n^{o(1)}$ and $W= 2^{n^{o(1)}}$.)
\end{problem}

\begin{lemma}[$\StaticLWS(\weightmat_\lowrank) \lequad \AllInnProd$]\label{lem:staticLRLWS} We have 
\[T^{\StaticLWS(\weightmat_\lowrank)}(n,d,W) \le T^{\AllInnProd}(n,d+1,nW) + \Oh(nd).\]
\end{lemma}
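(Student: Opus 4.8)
The plan is to realize the StaticLWS computation for $\weightmat_\lowrank$ as a single $\AllInnProd$ instance, by folding the additive DP-value $T[i]$ into one extra coordinate of the inner product. Recall that for $\weightmat_\lowrank$ the weight factorizes as $w_{i,j} = \langle \mu_i, \sigma_j \rangle$, so on the intervals $I = \{a+1,\dots,a+N\}$ and $J = \{a+N+1,\dots,a+2N\}$ the quantity StaticLWS must output is
\[ T'[j] = \min_{i \in I}\ T[i] + \langle \mu_i, \sigma_j \rangle \qquad \text{for all } j \in J. \]
Since the values $T[i]$ for all $i \in I$ are handed to us as part of the StaticLWS input, I would define augmented $(d+1)$-dimensional vectors
\[ a_i := (\mu_i,\, T[i]) \ \text{ for } i \in I, \qquad b_j := (\sigma_j,\, 1) \ \text{ for } j \in J. \]
Then $\langle a_i, b_j \rangle = \langle \mu_i, \sigma_j \rangle + T[i] = T[i] + w_{i,j}$, hence $\min_{i \in I} \langle a_i, b_j \rangle = T'[j]$ for every $j \in J$. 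Computing, for each $j \in J$, the value $\min_{i \in I} \langle a_i, b_j \rangle$ is exactly an $\AllInnProd$ instance on the $N$ out-vectors $\{a_i\}_{i \in I}$ and the $N$ in-vectors $\{b_j\}_{j \in J}$ of dimension $d+1$, and it directly yields all required outputs $T'[j]$.

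It then remains only to verify the parameters. The appended multiplier coordinate is $1$, and the appended value coordinate is $T[i]$; since each $T[i]$ is a minimum over a path using at most $n$ weights, we have $|T[i]| \le nW$, so every coordinate of the constructed vectors lies in $\{-nW,\dots,nW\}$, matching the claimed range (with the paper's convention that polynomial blow-ups of the value range are harmless). Building the $2N$ augmented vectors costs $\Oh(nd)$ time, and reading off the $\AllInnProd$ answers as the values $T'[j]$ is free. Together this gives precisely
\[ T^{\StaticLWS(\weightmat_\lowrank)}(n,d,W) \le T^{\AllInnProd}(n,d+1,nW) + \Oh(nd). \]

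Because correctness is a one-line inner-product identity, there is no genuine conceptual obstacle; the only point requiring a little care is the bookkeeping of the value range, i.e. justifying $|T[i]| \le nW$ so that the produced instance really is an $\AllInnProd$ instance with the stated parameters. I would also double-check the index alignment — that $I$ precedes $J$, so $i < j$ always holds and $w_{i,j} = \langle \mu_i, \sigma_j \rangle$ is indeed the correct weight entry — and confirm that the $T[i]$ supplied in the StaticLWS input are exactly the coefficients needed, so that no additional recomputation is incurred beyond forming the vectors.
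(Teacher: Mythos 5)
Your proposal is correct and matches the paper's proof essentially verbatim: the paper also defines $a_i := (\mu_i, T[i])$ and $b_j := (\sigma_j, 1)$ so that $\langle a_i, b_j\rangle = T[i] + w_{i,j}$, and notes $|T[i]| \le nW$ to justify the parameter $nW$. No differences worth remarking on.
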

\begin{proof} Consider $\StaticLWS(\weightmat_\lowrank)$. Let $I=\{a+1, \dots, a+N\}$, $J=\{a+N+1, \dots, a+2N\}$ and values $T[a+1],\dots,T[a+N]$ be given. To determine $T'[j] = \min_{i \in I} T[i] + w_{i,j}$ for all $j\in J$, it is sufficient to solve \AllInnProd on the vectors $a_{a+1},\dots,a_{a+N}, b_{a+N+1},\dots,b_{a+2N} \in \{nW, \dots, nW\}^{d+1}$ defined by 
\begin{align*}
a_i & := (\mu_i, T[i]) & b_j & = (\sigma_{j}, 1), & & \text{for all } i\in I,j\in J,
\end{align*}
since then $\langle a_i, b_j \rangle = T[i] + \langle \mu_i, \sigma_{j} \rangle = T[i] + w_{i,j}$. The claim immediately follows (note that $|T[i]| \le nW$).
\end{proof}

Finally, inspired by an elegant trick of \cite{VassilevskaWW10}, we reduce $\AllInnProd$ to $\MinInnProd$. 

\begin{lemma}[$\AllInnProd \lequad \MinInnProd$]\label{lem:allToMinInn} We have 
\[T^\AllInnProd(n,d,W) \le \Oh(n \cdot T^\MinInnProd(\sqrt{n}, d+3, nd W^2) \cdot \log^2 nW).\]
\end{lemma}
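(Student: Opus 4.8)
The plan is to realize the ``elegant trick'' of \cite{VassilevskaWW10} by combining a threshold-embedding with a $\sqrt{n}$-blocking of both input sides, and then running a single \emph{parallel} binary search that pins down all $n$ answers $m_j := \min_{i} \langle a_i, b_j \rangle$ simultaneously. First I would reduce the comparison ``$m_j \le r_j$'' to a \MinInnProd query: augmenting every $a_i$ by a coordinate equal to $1$ and every $b_j$ by a coordinate equal to $-r_j$ turns $\langle a_i, b_j \rangle \le r_j$ into $\langle a_i', b_j' \rangle \le 0$, so a single \MinInnProd instance with threshold $0$ detects whether \emph{some} column $j$ in the instance has an $a$-partner below \emph{its own} current threshold $r_j$. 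A constant number of further coordinates (the statement allows $d+3$) lets me pad short groups with dummy rows and columns whose augmented inner products are forced safely above the threshold, so that every query is a clean \MinInnProd instance of at most $\sqrt{n}$ vectors per side; the threshold coordinate pushes the entry magnitudes up to $\Oh(ndW^2)$, matching the stated weight bound.

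Next I would split the $a$-vectors into $\sqrt{n}$ groups $A_1, \dots, A_{\sqrt{n}}$ and the $b$-vectors into $\sqrt{n}$ groups $B_1, \dots, B_{\sqrt{n}}$, each of size $\sqrt{n}$, and maintain for every column $j$ an interval $[\ell_j, u_j]$ known to contain $m_j$. Since the inner products lie in a range of size $\Oh(dW^2)$, there are $\Oh(\log nW)$ halving levels. At each level I set each threshold $r_j$ to the midpoint of its interval and must learn the bit $\chi_j = [\,m_j \le r_j\,] = \bigvee_k [\min_{i \in A_k} \langle a_i, b_j \rangle \le r_j]$ for all $j$. To do this I keep a set $U$ of columns not yet certified as hits and iterate over all $n$ pairs $(A_k, B_l)$, issuing one \MinInnProd query on $(A_k,\ U \cap B_l)$ with the embedded thresholds: a \emph{negative} answer certifies, at the cost of a single call, that no surviving column of $B_l$ is hit by group $A_k$, whereas a \emph{positive} answer triggers a group-testing binary search inside $U \cap B_l$ (halving the candidate column set, $\Oh(\log n)$ calls per extracted column) to pull out the hit columns, which are then marked and removed from $U$. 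Updating the intervals from the certified bits completes the level.

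The crux is the call count. Per level the $n$ block scans cost $\Oh(n)$ calls regardless of the outcomes, and because every column extracted by the group-testing search is immediately removed from $U$, at most $n$ columns are extracted per level, each at cost $\Oh(\log n)$; hence a level uses $\Oh(n \log n)$ queries and the whole search uses $\Oh(n \log n \cdot \log nW) = \Oh(n \log^2 nW)$ queries, each a \MinInnProd instance on $\sqrt{n}$ vectors of dimension $d+3$ and weights up to $ndW^2$, with all interval bookkeeping and vector construction dominated by the oracle time. This yields $T^\AllInnProd(n,d,W) \le \Oh(n \cdot T^\MinInnProd(\sqrt{n}, d+3, ndW^2) \cdot \log^2 nW)$. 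I expect the main obstacle to be exactly this amortization: one must stop as soon as a \emph{single} witnessing group $A_k$ is found for a column (rather than enumerating all groups that witness it), so that the expensive extraction is charged only to genuinely newly resolved columns, and one must verify that the padding of short groups never creates a spurious hit that would corrupt the parallel binary search.
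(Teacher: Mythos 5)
Your proposal is correct and follows essentially the same route as the paper's proof: the same threshold embedding $\tilde a_i = (a_i,-1)$, $\tilde b_j = (b_j, r_j)$ reducing ``$\langle a_i,b_j\rangle \le r_j$'' to a zero-threshold \MinInnProd query, the same $\sqrt{n}$-blocking of both sides with the amortization that charges each call either to a block pair or to a column deleted upon finding its witness, and the same outer parallel binary search over $\Oh(\log nW)$ halving levels. The only substantive variation is the witness-extraction subroutine --- you group-test over the surviving column set at $\Oh(\log n)$ calls per hit, whereas the paper encodes the indices into low-order digits of the inner product and binary-searches over the threshold value --- and both mechanisms cost a logarithmic number of oracle calls per extracted column, fitting within the claimed $\log^2(nW)$ factor.
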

\begin{proof}
We first observe that we can tune \MinInnProd to also return a witness $(i,j)$ with $\langle a_i, b_j \rangle \le r$, if it exists. To do so, we replace each $a_i$ by the $(d+2)$-dimensional vector $a_i' = (a_i \cdot n, (i-1) n, -1)$ and similarly, each $b_j$ by the $(d+2)$-dimensional vector $b_j' = (b_j \cdot n, -1, j-1)$. Clearly, we have $\langle a_i', b_j'\rangle = \langle a_i,b_j \rangle n^2 - (i-1) n - (j-1)$. Thus $\langle a_i',b_j' \rangle \le rn^2$ if and only if $\langle a_i, b_j \rangle \le r$ since $i,j\in [n]$. Using a binary search over $r$, we can find $\min_{i,j} \langle a_i', b_j' \rangle$, from whose precise value we can determine also a witness, if it exists. Thus the running time $\wit(n,d,W)$ for finding such a witness is bounded by $\Oh(\log nW) \cdot T^\MinInnProd(n,d+2,nW)$.

To solve $\AllInnProd$, i.e., to compute $p_j := \min_{i\in [n]} \langle a_i, b_j \rangle$ for all $j\in [n]$, we employ a parallel binary search. Consider in particular the following problem $\probm$: Given arbitrary $r_1,\dots, r_n$, determine for all $j \in [n]$ whether there exists $i\in [n]$ such that $\langle a_i, b_j \rangle \le r_j$. We will show below that this problem can be solved in time $\Oh(n \cdot \wit(\sqrt{n},d+1,dW^2))$. The claim then follows, since starting from feasible intervals $\intR_1 = \cdots = \intR_n = \{-dW^2,\dots,dW^2\}$ satisfying  $p_j \in \intR_j$, we can halve the sizes of each interval simultaneously by a single call to $\probm$. Thus, after $\Oh(\log (dW))$ calls, the true values $p_j$ can be determined, resulting in the time guarantee $T^\AllInnProd(n,d,w) = \Oh(n \cdot \wit(\sqrt{n}, d+1, dW^2) \cdot \log (dW)) = \Oh(n \cdot T^\MinInnProd(\sqrt{n}, d+3, n dW^2) \log^2(nW))$, as desired.

We complete the proof of the claim by showing how to solve $\probm$. Without loss of generality, we can assume that $r_j \le dW^2$ for every $j$, since no larger inner product may exist. We group the vectors $a_1,\dots,a_n$ in $g := \lceil \sqrt{n} \rceil$ groups $A_1,\dots,A_g$ of size at most $\sqrt{n}$ each, and do the same for the vectors $b_1,\dots,b_n$ to obtain $B_1,\dots,B_g$. Now, we iterate over all pairs of groups $A_k, B_\ell$, $k,\ell \in [g]$: For each such choice of pairs, we do the following process. For each vector $a_i \in A_k$, we define the $(d+1)$-dimensional vector $\tilde{a}_i := (a_i, -1)$ and for every vector $b_j\in B_\ell$, we define $\tilde{b}_j:= (b_j, r_j)$. In the obtained instance $\{ \tilde{a}_i \}_{a\in A_k}, \{ \tilde{b}_j \}_{b\in B_\ell}$, we try to find some $i,j$ such that $\langle \tilde{a}_i, \tilde{b}_j \rangle \le 0$, which is equivalent to $\langle a_i, b_j \rangle \le r_j$. If we succeed in finding such a witness, we delete $b_j$ and $\tilde{b_j}$ (but remember its witness) and repeat finding witnesses (an deleting the witnessed $b_j$) until we cannot find any. The process then ends and we turn to the next pair of groups.

It is easy to see that for all $j\in [n]$, we have $\langle a_i, b_j \rangle \le r_j$ for some $i\in[n]$ if and only if the above process finds a witness for $b_j$ at some point. To argue about the running time, we charge the running time of every call to witness finding to either (1) the pair $A_k,B_\ell$, if the call is the first call in the process for $A_k, B_\ell$, or (2) to $b_j$, if the call resulted from finding a witness for $b_j$ in the previous call. Note that every pair $A_k, B_\ell$ is charged by exactly one call and every $b_j$ is charged by at most one call (since in after a witness for $b_j$ is found, we delete $b_j$ and no further witness for $b_j$ can be found). Thus in total, we obtain a running time of at most $(g^2 + n) \cdot  \wit(\sqrt{n}, d+1, dW^2) + \Oh(n) = \Oh(n \cdot \wit(\sqrt{n}, d+1, dW^2))$.
\end{proof}

\begin{theorem}\label{thm:lrlws}
We have $\LRLWS \eqquad \MinInnProd$.
\end{theorem}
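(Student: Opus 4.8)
The plan is to prove the two subquadratic reductions separately and then combine them, since $\eqquad$ is by definition the conjunction of $\lequad$ in both directions. For the direction $\MinInnProd \lequad \LRLWS$, I would simply invoke Lemma~\ref{lem:LRLWS-lb}: it already exhibits an instance transformation mapping a size-$n$ \MinInnProd instance to a size-$(2n+1)$ \LRLWS instance with dimension $d+2$ and weight bound $dW$, at an additive $\Oh(nd)$ overhead. Since we assume $d = n^{o(1)}$ and $W = 2^{n^{o(1)}}$, the new parameters $d+2$ and $dW$ remain within the admissible ranges and the instance size only doubles; hence any $\Oh(m^{2-\varepsilon})$-time algorithm for \LRLWS yields an $\Oh(n^{2-\varepsilon'})$-time algorithm for \MinInnProd, establishing this direction.

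For the reverse direction $\LRLWS \lequad \MinInnProd$, the idea is to chain the three remaining lemmas through the intermediate problems \StaticLWS and \AllInnProd. Concretely, I would first apply Lemma~\ref{lem:dynToStatic} with $\weightmat = \weightmat_\lowrank$ to reduce $\LRLWS = \LWS(\weightmat_\lowrank)$ to $\StaticLWS(\weightmat_\lowrank)$; then Lemma~\ref{lem:staticLRLWS} to reduce $\StaticLWS(\weightmat_\lowrank)$ to \AllInnProd; and finally Lemma~\ref{lem:allToMinInn} to reduce \AllInnProd to \MinInnProd. Because $\lequad$ is transitive, composing these three reductions yields $\LRLWS \lequad \MinInnProd$ directly, and combining with the first direction gives the claimed equivalence.

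The one point that requires care --- and which I view as the main (albeit modest) obstacle --- is to check that these compositions genuinely stay subquadratic once all the parameter blow-ups are tracked. The dimension grows only additively along the chain (from $d$ to $d+1$ in Lemma~\ref{lem:staticLRLWS}, then to $d+3$ in Lemma~\ref{lem:allToMinInn}), so it remains $n^{o(1)}$; likewise the weight bound grows from $W$ to $nW$ to $ndW^2$, which stays $2^{n^{o(1)}}$. The genuinely delicate step is Lemma~\ref{lem:allToMinInn}, where \MinInnProd is invoked on instances of size only $\sqrt{n}$ but $\Oh(n)$ times (up to polylogarithmic factors from the parallel binary search). I would confirm that an $\Oh(m^{2-\varepsilon})$-time \MinInnProd algorithm therefore costs $\tOh(n \cdot (\sqrt{n})^{2-\varepsilon}) = \tOh(n^{2 - \varepsilon/2})$ overall, which is subquadratic; this is precisely what makes the $\sqrt{n}$-grouping inspired by \cite{VassilevskaWW10} essential, since a direct reduction on size-$n$ instances would inflate the running time back to quadratic and destroy the reduction.
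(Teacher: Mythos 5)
Your proposal is correct and follows exactly the paper's own argument: the theorem is proved by chaining Lemmas~\ref{lem:LRLWS-lb}, \ref{lem:dynToStatic}, \ref{lem:staticLRLWS}, and~\ref{lem:allToMinInn} into the cycle $\MinInnProd \lequad \LRLWS \lequad \StaticLWS(\weightmat_\lowrank) \lequad \AllInnProd \lequad \MinInnProd$. Your additional bookkeeping of the dimension and weight blow-ups, and the $\tOh(n\cdot(\sqrt{n})^{2-\varepsilon}) = \tOh(n^{2-\varepsilon/2})$ calculation, are accurate and only make explicit what the paper leaves implicit.
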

\begin{proof}
In Lemmas~\ref{lem:LRLWS-lb},~\ref{lem:dynToStatic},~\ref{lem:staticLRLWS}, and~\ref{lem:allToMinInn}, we have proven
\begin{multline*}
 \MinInnProd \lequad \LRLWS = \LWS(\weightmat_\lowrank) \\
 \lequad \StaticLWS(\weightmat_\lowrank) \lequad \AllInnProd \lequad \MinInnProd, 
\end{multline*}
proving the claim.
\end{proof}

\section{Coin Change and Knapsack Problems}
\label{sec:coinchange}

In this section, we focus on the following problem related to \Knapsack: Assume we are given coins of denominations $d_1, \dots, d_m$ with corresponding weights $w_1,\dots,w_m$ and a target value $n$, determine a way to represent $n$ using these coins (where each coin can be used arbitrarily often) minimizing the total sum of weights of the coins used. Since without loss of generality $d_i \le n$ for all $i$, we can assume that $m\le n$ and think of $n$ as our problem size. In particular, we describe the input by weights $w_1, \dots, w_n$ where $w_i$ denotes the weight of the coin of denomination $i$ (if no coin with denomination $i$ exists, we set $w_i = \infty$). It is straightforward to see that this problem is an $\LWS$ instance $\LWS(\weightmat_\coinC)$, where the weight matrix $\weightmat_\coinC$ is a Toeplitz matrix.

\begin{problem}[\CoinC] We define the following \LWS instantiation $\CoinC = \LWS(\weightmat_\coinC)$.\\
\Data{weight sequence $w = (w_1, \dots, w_n)$ with $w_i \in \{-W, \dots, W\} \cup \{\infty\}$}
\Weights{$w_{i,j} = w_{j-i}$ for $0\le i < j \le n$}
\end{problem}

Translated into a Knapsack-type formulation (i.e., denominations are weights, weights are profits, and the objective becomes to maximize the profit), the problem differs from \UnboundedKnapsack only in that it searches for the most profitable multiset of items of weight \emph{exactly} $n$, instead of \emph{at most} $n$.

\UKnapsackproblem

The purpose of this section is to show that both \CoinC and \UnboundedKnapsack are subquadratically equivalent to the \mpconv problem. Along the way, we also prove quadratic-time \mpconv-hardness of \Knapsack. Recall the definition of \mpconv.

\mpconvproblem

As opposed to the classical convolution, which we denote as $a\classicconv b$, solvable in time $\Oh(n\log n)$ using FFT, no strongly subquadratic algorithm for \mpconv is known. Compared to the popular orthogonal vectors problem, we have less support for believing that no $\Oh(n^{2-\varepsilon})$-time algorithm for \mpconv exists. In particular, interesting special cases can be solved in subquadratic time~\cite{ChanL15} and there are subquadratic-time co-nondeterministic and nondeterministic algorithms~\cite{BremnerCDEHILPT14, CarmosinoGIMPS16}. At the same time, breaking this long-standing quadratic-time barrier is a prerequisite for progress on refuting the 3SUM and APSP conjectures. This makes it an interesting target particularly for proving subquadratic \emph{equivalences}, since both positive and negative resolutions of this open question appear to be reasonable possibilities.

To obtain our result, we address two issues: (1) We show an equivalence between the problem of determining only the value $T[n]$, i.e., the best way to give change only for the target value $n$, and to determine \emph{all values} $T[1], \dots, T[n]$, which we call the \emph{output-intensive version}. (2) We show that the output-intensive version is subquadratic equivalent to \mpconv.

\begin{problem}[\oiCoinC] 
The output-intensive version of \CoinC is to determine, given an input to $\CoinC$, all values $T[1],\dots,T[n]$.
\end{problem}

We first consider issue (2) and provide a \mpconv-based lower bound for \oiCoinC.

\begin{lemma}[$\mpconvmath \lequad \oiCoinC$]
\label{lem:convTooiCC}
We have $T^\mpconvmath(n, W) \le T^\oiCoinC(6n, 4(2W+1)) + \Oh(n)$.
\end{lemma}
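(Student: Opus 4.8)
The plan is to realize a single min-plus convolution as one ``layer'' of the coin-change recurrence, so that reading off the right entries of the output-intensive table recovers $a\conv b$. The recurrence $T[j] = \min_{0\le i<j} T[i] + w_{j-i}$ computes, for target $j$, the minimum total weight of a multiset of coins whose denominations sum to exactly $j$. I would encode the index $i$ of $a$ by a coin of denomination in a low band and the index $j$ of $b$ by a coin in a disjoint high band, with the coin weights carrying the values $a_i$ and $b_j$. The goal is to arrange denominations and weights so that, for each relevant target, the optimal multiset consists of exactly one $a$-coin and one $b$-coin whose indices sum to $k$; then $T[\cdot]$ at that target equals $(a\conv b)_k$ up to a known additive shift, and an $\Oh(n)$ postprocessing step extracts all convolution entries.

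Concretely, for problem size $6n$ I would introduce, for each $i\in\{0,\dots,n-1\}$, an $a$-coin of denomination $n+i$ and, for each $j\in\{0,\dots,n-1\}$, a $b$-coin of denomination $3n-1+j$; all other denominations in $\{1,\dots,6n\}$ receive weight $\infty$. One $a$-coin and one $b$-coin then sum to $(n+i)+(3n-1+j)=4n-1+(i+j)$, so the target $t_k := 4n-1+k$ is hit by exactly the one-each pairs with $i+j=k$, for $k=0,\dots,2n-2$ (and $t_k\le 6n-3$). The bands are chosen so that two $a$-coins sum into $[2n,4n-2]$ and two $b$-coins into $[6n-2,8n-4]$, both disjoint from the target range $[4n-1,6n-3]$; hence every \emph{two-coin} multiset summing to some $t_k$ is automatically one-each. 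After translating the inputs to be nonnegative (shifting each value by $W$, which shifts each $(a\conv b)_k$ by $2W$), I would assign each coin a weight $M$ plus its shifted value, choosing $M=\Oh(W)$ large enough that any multiset using three or more coins is strictly more expensive than the cheapest two-coin solution; this keeps all weights within $\{-4(2W+1),\dots,4(2W+1)\}$.

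With this setup the correctness argument runs as follows: for each target $t_k$, a single coin is too small and at least one one-each pair exists, so any feasible multiset uses at least two coins; the large offset $M$ makes two coins strictly cheaper than any multiset of three or more coins; and the band separation forces the two coins to be one $a$-coin and one $b$-coin with $i+j=k$. Minimizing $2M + a_i + b_j$ over $i+j=k$ therefore yields $T[t_k] = 2M + (a\conv b)_k + 2W$, from which $(a\conv b)_k$ is recovered. The main obstacle is the simultaneous balancing act in the construction: the denomination bands must be separated enough to rule out two-$a$ and two-$b$ solutions combinatorially, while the weight offset $M$ must dominate the value range to rule out solutions with more coins, all while keeping the largest target below $6n$ and the weights bounded by $4(2W+1)$. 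It is exactly this tension that dictates the factors $6$ and $4(2W+1)$ appearing in the statement.
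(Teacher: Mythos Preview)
Your proposal is correct and follows essentially the same idea as the paper: encode the $a_i$ and $b_j$ as coin weights in two disjoint denomination bands so that the targets in a third range can only be reached (optimally) by one $a$-coin plus one $b$-coin, then read off $(a\conv b)_k$ from the $T$-table.

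The execution differs in one respect worth noting. The paper uses a large \emph{finite} blocking weight $4M$ (with $M=2W+1$) at all non-$a$/non-$b$ denominations, then establishes the $T$-values range by range via a somewhat delicate induction that repeatedly argues why picking a $4M$-weight coin is never optimal. You instead set those weights to $\infty$, which immediately restricts attention to multisets of $a$- and $b$-coins; the band separation then handles the two-coin case combinatorially, and the additive offset $M>4W$ per coin rules out three or more coins by a single cost comparison. This buys you a cleaner argument (and, incidentally, a slightly tighter finite-weight bound of $6W+1$ rather than $8W+4$), at no cost since the \CoinC\ definition explicitly permits $\infty$ weights. The paper's choice of finite blockers is not essential to the reduction.
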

\begin{proof}
We first do a translation of the input. Note that for any scalars $\alpha, \beta$, we have $(a+\alpha) \conv (b+\beta) = (a \conv b) +  \alpha + \beta$. Let $M:= 2W +1$. Without loss of generality, we may assume that 
\begin{align*}
2M  & \le  a_i \le 3M & &\text{for all } i=0, \dots, n-1, \\
0 & \le  b_j \le M & &\text{for all } j=0, \dots, n-1.
\end{align*}
We now define a \CoinC instance with a problem size $n'=6n$ and $W' = 4M$ by defining
\[ w = (4M)^n \concat (a_{n-1}, \dots, a_0) \concat (4M)^n \concat (b_{n-1}, \dots, b_0) \concat (4M)^{2n}. \]

We now claim that $T[4n + i] = (a\conv b)_{2n-i}$ for $i=1, \dots, 2n$, which immediately yields the lemma. To do so, we will prove the following sequence of identities. 
\begin{align}
T[i]  &= 4M & &\text{for } i \in [n], \label{eq:range1}\\
T[n+i]  &= a_{n-i} & &\text{for } i \in [n], \label{eq:range2}\\
T[2n+i]  &= 4M & &\text{for } i \in [n], \label{eq:range3}\\
T[3n+i]  &= b_{n-i} & &\text{for } i \in [n], \label{eq:range4}\\
T[4n+i]  &= (a\conv b)_{2n-i} & &\text{for } i \in [2n], \label{eq:range5}
\end{align}
In the last line, we define, for our convenience, $(a\conv b)_{2n-1} = 4M$ (note that before, we defined only the entries $(a\conv b)_k$ with $k \le 2n-2$).

For later convenience, observe that $0 \le w_i \le 4M$ for all $i \in [n']$. It is easy to see that this implies $0 \le T[i] \le 4M$ for $i\in [n']$.

The identities in~\eqref{eq:range1} are obvious.

To prove the identities in~\eqref{eq:range2} inductively over $i$, recall that $T[n+i] = \min_{j=1, \dots, n+i}\{T[n+i-j] + w_j\}$. Observe that $T[n+i-j] + w_j < 4M$ can only occur if $j \ge n+1$ (since otherwise $w_j = 4M$), which implies $n+i-j \le n$ and $T[n+i-j] = 4M$ except for the case $j=n+i$. In this case, we have $T[n+i-j] + w_j = T[0] + w_{n+i} = a_{n-i} \le 4M$.

To prove the identities in~\eqref{eq:range3}, observe that for $1 \le j \le 3n$, we have $w_j \ge 2M$ by assumption $\min_i a_i \ge 2M$. Similarly, we have already argued that $T[i'] \ge 2M$ for $1\le i' \le 2n$. Thus, we can inductively show that $T[2n+i] = \min \{T[0] + w_{2n+i}, \min_{j=1, \dots, 2n+i-1} T[2n+i-j] + w_j \} = 4M$ using $w_{2n+i} = 4M$ and that every sum in the inner minimum expression is at least $4M$.

To prove the identities in~\eqref{eq:range4}, note that for $T[3n+i-j] + w_j < 4M$ to hold, we must have either $n + 1 \le j \le 2n$ or $3n + 1 \le j \le 3n+i$, since otherwise $w_j = 4M$. We observe that for $n+1 \le j \le 2n$, we have $w_j \ge \min_i a_i \ge 2M$ and $T[3n+i - j]\ge \min_i a_i \ge 2M$. Thus, we may assume that $3n+1 \le j \le 3n+i$. Note that in this case, we have $T[3n+i - j] = 4M$ except for the case $j=3n+i$, where we have $T[3n+i -j] + w_j = T[0] + w_{3n+i} = b_{n-i} < 4M$.

Finally, for the identities in~\eqref{eq:range5}, we might have $T[4n+i] + w_j < 4M$ only if $n+1 \le j \le 2n$ or $3n+1 \le j \le 4n$. First consider the case that $i=1$.  We have 
\[ T[4n+1] = \min \{ w_{4n+1}, \min_{n+1 \le j \le 2n} \underbrace{T[4n+1 - j]}_{ = 4M} + w_j, \min_{3n+1 \le j \le 4n} \underbrace{T[4n+1 - j]}_{= 4M} + w_j \} = 4M.\]
Inductively over $1 < i \le 2n$, we will prove $T[4n+i] = (a\conv b)_{2n-i}$. By definition,
\begin{align}
T[4n+i] & = \min \{ w_{4n+i}, \min_{n+1 \le j \le 2n} T[4n+i-j] + w_j, \min_{3n+1 \le j \le 4n} T[4n+i - j] + w_j \} \nonumber\\
& = \min \{ w_{4n+i}, \min_{1 \le j' \le n} T[3n+i-j'] + a_{n-j'}, \min_{1 \le j' \le n} T[n+i - j'] + b_{n-j'}\} \label{eq:convlastpart}
\end{align}
Note that
\[ \min_{1 \le j' \le n} \underbrace{T[n+i - j']}_{= 4M \text{ for } j' \ge i \text{ or } j' < i-n} + b_{n-j'} = \min_{\max\{1,i-n\} \le j' \le \min\{i-1,n\}} a_{n-(i-j')} + b_{n-j'} = (a\conv b)_{2n-i} \]
where the last equation follows from noting that the choice of $j'$ lets $n-j'$ and $n-(i-j')$ range over all admissible pairs of values in $\{0, \dots, n-1\}$ summing up to $2n-i$. Similarly, we inductively prove that 
\[ \min_{1 \le j' \le n} T[3n+i - j'] + a_{n-j'} = \min_{\max\{1,i-n\} \le j' \le \min\{i-1,n\}} a_{n-(i-j')} + b_{n-j'} = (a\conv b)_{2n-i}, \]
since $a_{n-j'} \ge 2M$ and $T[3n+i-j'] \ge 2M$ whenever $j' \ge i$ or $j' < i-n$ (where the last regime uses $T[4n+i'] = (a\conv b)_{2n-i'} \ge 2M$ inductively for $i' < i$). Finally, since $(a \conv b)_{2n-i} \le (\max_i a_i) + (\max_j b_j) \le 4M$, we can simplify \eqref{eq:convlastpart} to $T[4n+i] = (a\conv b)_{2n-i}$.
\end{proof}

Using the notion of \StaticLWS, the other direction is straight-forward.

\begin{lemma}\label{lem:oiCoinCtompconv}
We have $\oiCoinC \lequad \StaticLWS(\weightmat_\coinC) \lequad \mpconvmath$.
\end{lemma}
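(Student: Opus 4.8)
The statement to prove is the chain of reductions
\[
\oiCoinC \lequad \StaticLWS(\weightmat_\coinC) \lequad \mpconvmath.
\]

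\textbf{Plan.} I would prove the two subquadratic reductions separately. The first reduction, $\oiCoinC \lequad \StaticLWS(\weightmat_\coinC)$, is an instance of the general divide-and-conquer scheme already established in Lemma~\ref{lem:dynToStatic}, except that there the goal was to compute only $T[n]$, whereas now the output-intensive version requires all of $T[1],\dots,T[n]$. Fortunately, Algorithm~\ref{alg:staticlws} in fact computes \emph{all} the values $T[k]$ as a side effect (it returns the full tuple $(T[i],\dots,T[j])$ at the top level), so the same recursion applies verbatim: each call makes two recursive calls on halved intervals plus one call to $\StaticLWS(\weightmat_\coinC)$, yielding the recurrence $T^S(n)\le 2T^S(n/2)+T^\StaticLWS(n/2)+\Oh(n)$, which turns an $\Oh(N^{2-\varepsilon})$ algorithm for $\StaticLWS(\weightmat_\coinC)$ into a $\tOh(n^{2-\varepsilon})$ algorithm for $\oiCoinC$. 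Thus the first reduction follows immediately from Lemma~\ref{lem:dynToStatic} applied to $\weightmat=\weightmat_\coinC$, with the observation that the output of the algorithm already contains all table entries.

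\textbf{Second reduction.} The crux is $\StaticLWS(\weightmat_\coinC) \lequad \mpconvmath$, and this is where the Toeplitz structure $w_{i,j}=w_{j-i}$ is essential. Recall that an instance of $\StaticLWS(\weightmat_\coinC)$ provides intervals $I=\{a+1,\dots,a+N\}$ and $J=\{a+N+1,\dots,a+2N\}$ together with the values $T[a+1],\dots,T[a+N]$, and asks us to compute
\[
T'[j]=\min_{i\in I} T[i]+w_{j-i}\qquad\text{for all } j\in J.
\]
The plan is to recognize the right-hand side as a min-plus convolution. Writing $i=a+p$ with $p\in\{1,\dots,N\}$ and $j=a+N+q$ with $q\in\{1,\dots,N\}$, we have $j-i=N+q-p$, so
\[
T'[a+N+q]=\min_{1\le p\le N} T[a+p]+w_{N+q-p}.
\]
Setting $u_p:=T[a+p]$ and $v_r:=w_r$, the index $N+q-p$ ranges over the relevant window as $p$ varies, so $T'[a+N+q]$ equals an entry of the min-plus convolution $u\conv v$ at position $N+q$. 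Hence one call to $\mpconvmath$ on suitably arranged (and padded) length-$\Oh(N)$ vectors recovers all $N$ desired values $T'[j]$. An $\Oh(N^{2-\varepsilon})$-time algorithm for $\mpconvmath$ therefore solves $\StaticLWS(\weightmat_\coinC)$ in $\Oh(N^{2-\varepsilon})$ time as well.

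\textbf{Main obstacles.} The genuinely substantive work is bookkeeping rather than conceptual. First, I must handle the weight entries $w_r=\infty$ (absent denominations): these carry over directly into the min-plus convolution as $\infty$ entries, which is fine provided the $\mpconv$ formulation tolerates $\infty$ (equivalently, one replaces $\infty$ by a sufficiently large sentinel value, larger than any achievable finite sum, and interprets the output accordingly). Second, I must align the index windows correctly: the subtraction $j-i$ ranges only over $\{1,\dots,2N-1\}$ for $i\in I,\ j\in J$, so I need to pad or shift $u$ and $v$ so that exactly the desired min-plus convolution entries $(u\conv v)_{k}$ for the relevant range of $k$ are produced, discarding the irrelevant ones. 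These are routine index manipulations of the kind handled carefully in Lemma~\ref{lem:convTooiCC}; the only point requiring a little care is choosing the sentinel for $\infty$ large enough that spurious finite values never beat a genuine optimum, while still fitting in the weight bound that $\mpconvmath$ expects. With these two details addressed, the chain of reductions composes to give the claimed statement.
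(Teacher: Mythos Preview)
Your proposal is correct and follows essentially the same approach as the paper: you invoke Lemma~\ref{lem:dynToStatic} (noting that it in fact computes all table entries) for the first reduction, and for the second you recognize $T'[a+N+q]=\min_{p} T[a+p]+w_{N+q-p}$ as a min-plus convolution entry, which is exactly what the paper does. The paper is slightly more explicit about the padding---it uses the sentinel $nM$ with $M=2W+1$ and writes out the padded vectors $u=(nM,T[a+1],\dots,T[a+N],nM,\dots,nM)$ and $v=(nM,w_1,\dots,w_{2N})$---but this is precisely the routine bookkeeping you flagged as the remaining obstacle.
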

\begin{proof}
In Lemma~\ref{lem:dynToStatic}, we have in fact reduced the
output-intensive version of $\LWS(\weightmat)$ to our static
problem $\StaticLWS(\weightmat)$, thus specialized to the coin change problem, we only need to show that $\StaticLWS(\weightmat_\coinC)$ subquadratically reduces to \mpconv. Consider an input instance to $\StaticLWS$ given by $I= \{a+1, \dots, a+N\}$, $J=\{a+N+1, \dots, a+2N\}$ and values $T[i], i \in I$. Defining $M:=2W+1$ and the vectors 
\begin{align*}
u &:= (nM, T[a+1],\dots,T[a+N], \overbrace{nM, \dots, nM}^{N \text{ times}} ),  \\
v &:= (nM, w_1, \dots, w_{2N}),
\end{align*}
we have $(u \conv v)_{N+k} = \min_{i=1, \dots, N} T[a+i] + w_{N+k-i} = T'[a+N+k]$ for all $k = 1, \dots, N$, thus a \mpconv of two $(2n+1)$-dimensional vectors solves $\StaticLWS(\weightmat_\coinC)$, yielding the claim.
\end{proof}

The last two lemmas resolve issue (2). We proceed to issue (1) and show that the output-intensive version is subquadratically equivalent to both \CoinC and \UnboundedKnapsack that only ask to determine a single output number. We introduce the following notation for our convenience: Recall that weight $w_i$ denotes the weight of a coin of denomination $i$. For a multiset $S\subseteq [n]$, we let $d(S) := \sum_{i \in S} i$ denote its \emph{total denomination}, i.e., sum of the denomination of the coins in $S$ (where multiples uses of the same coin is allowed, since $S$ is a multiset). We let $w(S):= \sum_{i\in S} w_i$ denote the weight of the multiset. Analogously, when considering a Knapsack instance, $p(S) = \sum_{i} p_i$ denotes the total profit of the item (multi)set $S$.

It is trivial to see that $\UnboundedKnapsack \lequad \oiCoinC$. Furthermore, we can give the following simple reduction from \CoinC to \UnboundedKnapsack.
\begin{obs}[$\CoinC \lequad \UnboundedKnapsack \lequad \oiCoinC$]\label{obs:Uknap}
We have \\$T^\CoinC(n, W) \le T^\UnboundedKnapsack(n, nW)+\Oh(n)$ and $T^\UnboundedKnapsack(n, W) \le T^\oiCoinC(n,W)+\Oh(n)$.
\end{obs}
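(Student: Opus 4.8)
The plan is to prove the two stated inequalities separately, as they correspond to reductions of rather different flavour. The second inequality, $T^\UnboundedKnapsack(n,W) \le T^\oiCoinC(n,W)+\Oh(n)$, is the easy one: I would simply negate profits to turn maximization into minimization. Given a \UnboundedKnapsack instance with profits $p_1,\dots,p_n \in \{0,\dots,W\}$, set $w_i := -p_i \in \{-W,\dots,0\}$ and run \oiCoinC on this weight sequence. This returns, for every $j$, the value $T[j] = \min\{w(S) : d(S) = j\} = -\max\{p(S) : d(S)=j\}$. Since the \UnboundedKnapsack objective allows any total size at most $n$, its optimum equals $\max_{0\le j \le n}\max\{p(S):d(S)=j\} = -\min_{0\le j\le n} T[j]$ (with $T[0]=0$). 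Thus a single pass over $T[0],\dots,T[n]$ in $\Oh(n)$ time recovers the answer, and the weight bound is unchanged. Note that nonnegativity of the $p_i$ is exactly what makes the at-most objective coincide with the best exact-$j$ value for some $j\le n$.

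For the first inequality $T^\CoinC(n,W) \le T^\UnboundedKnapsack(n,nW)+\Oh(n)$ I must bridge three mismatches at once: minimization versus maximization, possibly negative (and $\infty$) weights versus nonnegative profits, and the exact-sum constraint $d(S)=n$ versus the at-most constraint $d(S)\le n$. I would negate to pass to maximization and handle the remaining two issues with a single device, a size-proportional offset. Concretely, pick a multiplier $M$ and define profits $p_i := M\cdot i - w_i$ for every denomination $i$ with finite weight, and $p_i := 0$ for every $i$ with $w_i = \infty$. The crucial algebraic point is that $\sum_{i\in S} M i = M\,d(S)$ depends only on the total denomination and not on the number of coins used, so $p(S) = M\,d(S) - w(S)$; restricted to multisets with $d(S)=n$, maximizing $p(S)$ is exactly minimizing $w(S)$. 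Choosing $M$ large enough (of order $nW$, so that $M$ exceeds the spread $\max_S w(S)-\min_S w(S)\le 2nW$ over feasible multisets) forces the \UnboundedKnapsack optimum to use the full budget, i.e.\ to satisfy $d(S)=n$ whenever $n$ is representable. The $\infty$-coins are encoded as $0$-profit items; since every real coin has $p_i = Mi - w_i \ge M - W > 0$, spending budget on a $0$-profit item is never beneficial, so these items do not influence the optimum. Writing $P^*$ for the value returned by \UnboundedKnapsack, I would output $Mn - P^*$ as $T[n]$, unless this exceeds $nW$ — which happens exactly when $n$ is not representable by the real coins — in which case I output $\infty$. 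The $\Oh(n)$ overhead covers building the profit vector and the final arithmetic.

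The hard part will be the forcing step, and in particular reconciling it with the claimed weight bound. Forcing $d(S)=n$ requires $M$ to dominate the weight spread of feasible solutions, which can be as large as $\Theta(nW)$ because a solution may contain up to $n$ coins; the natural choice $M=\Theta(nW)$ then makes the largest profit $M\cdot n - w_n$ of order $n^2W$ rather than $nW$. Matching the stated $\Oh(nW)$ bound therefore needs an extra idea: if denomination $1$ is present one can force exactness with only $M=\Theta(W)$ (any multiset leaving a gap can be completed by $1$-coins, each of which raises $p(S)$ by $M-w_1\ge M-W>0$), yielding profits $Mi\le Wn=nW$. Handling general instances lacking denomination $1$ within the $nW$ budget is the delicate point; a safe fallback that still suffices for the desired subquadratic equivalence is to allow the larger multiplier and note that the definition of $\lequad$ only demands a polynomial blow-up of the weight parameter. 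I would finally double-check the two boundary conventions — that $T[0]=0$ participates in the minimum in the easy direction, and that the threshold $nW$ correctly separates representable from non-representable $n$ in the hard direction.
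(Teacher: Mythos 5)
Your proposal is correct and matches the paper's proof essentially step for step: the same shift $p_i := i\cdot M - w_i$ with a large multiplier to force the knapsack optimum to exhaust the budget in one direction, and plain negation of profits followed by taking $\min_j T[j]$ in the other. Your worry about the multiplier is well placed but does not change the approach --- the paper itself takes ``a sufficiently large constant $M \ge nW$'' and is equally loose about whether the resulting profits fit the stated $nW$ bound, and as you note a polynomial blow-up in $W$ is all the subquadratic reduction requires.
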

\begin{proof}
Given a \CoinC instance, for every weight $w_i<\infty$, we create an item of size $i$ and profit $p_i := i \cdot M - w_i$ in our resulting \UnboundedKnapsack instance for a sufficiently large constant $M\ge nW$. This way, all profits are positive and every multiset $S$ whose sizes sum up to $B$ has a profit of $p(S) = B \cdot M - w(S)$. Since $M \ge nW \ge \max_{S, d(S) \le n} |w(S)|$, this ensures that the maximum-profit multiset of total size/denomination at most $n$ has a total size/denomination of exactly $n$. Thus, the optimal multiset $S^\ast$ has profit $p(s^\ast) = n \cdot M - \min_{S: d(S) = n} w(S) = n\cdot M -T[n]$, from which we can derive $T[n]$, as desired.

Given an \UnboundedKnapsack instance, we define for every item of size $i$ and profit $p_i$ the corresponding weight $w_i = - p_i$ in a corresponding \CoinC instance. It remains to compute all $T[1], \dots, T[n]$ in this instance and determining their minimum, concluding the reduction.
\end{proof}

The remaining part is similar in spirit to Lemma~\ref{lem:allToMinInn}: Somewhat surprisingly, the same general approach works despite the much more sequential nature of the Knapsack/CoinChange problem -- this sequentiality can be taken care of by a more careful treatment of appropriate subproblems that involves solving them in a particular order and feeding them with information gained during the process.

In what follows, to clarify which instance is currently considered, we let $T^\instI$ denote the $T$-table of the \textsc{(oi)CC} LWS problem (see Problem~\ref{prob:lws}) corresponding to instance $\instI$. Dropping the superscript always refers to $T^\instI$.

\begin{lemma}[$\oiCoinC \lequad \CoinC$]\label{lem:oiCCtoCC}
We have that $T^\oiCoinC(n,W) \le \Oh(\log(nW) \cdot n \cdot T^\CoinC(24\sqrt{n}, 3n^2W))$.
\end{lemma}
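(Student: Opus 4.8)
The plan is to reduce \oiCoinC to \CoinC using the same divide-and-conquer-with-parallel-search philosophy as Lemma~\ref{lem:allToMinInn}, but adapted to the sequential structure of coin change. Recall that \oiCoinC asks for all values $T[1],\dots,T[n]$ while \CoinC returns only a single number (the optimal weight for total denomination exactly equal to the problem size). The key difficulty is that in \CoinC the target value is \emph{rigidly tied} to the instance size: a black-box call of size $m$ only tells us about representing the value $m$ exactly. So to extract information about $T[j]$ for a general $j$, I would engineer small \CoinC subinstances whose solutions encode the desired $T[j]$.

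First I would exploit the fact that \CoinC allows us to translate the weight range freely (as noted after the definition of \CoinC via $w_i' = iM + w_i$) and, crucially, that we may force the optimal multiset to use coins only from a prescribed set of denominations by setting the others to weight $\infty$ (or a prohibitively large value). The central idea is to compute the $T$-table in blocks of size roughly $\sqrt{n}$: partition $[n]$ into $\Oh(\sqrt n)$ consecutive blocks, and within each block, once all earlier blocks' $T$-values are known, determine the current block's $T$-values. To recover a single value $T[j]$ where $j$ lies in the current block and the best representation uses a ``last coin'' of denomination in some earlier block, I would build a \CoinC instance of size $\Oh(\sqrt n)$ that hard-codes the already-computed prefix values $T[i]$ (for $i$ in one of the $\Oh(\sqrt n)$ earlier blocks) as the weights of artificial coins, so that a single call of size $\Oh(\sqrt n)$ reveals the minimum over that pairing. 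Iterating over all pairs of (source block, target block) gives $\Oh(\sqrt n \cdot \sqrt n) = \Oh(n)$ calls each of size $\Oh(\sqrt n)$; the $\log(nW)$ factor and the need to extract individual entries (rather than a single aggregate) are handled by a parallel-binary-search / witness-finding trick exactly as in Lemma~\ref{lem:allToMinInn}, charging each witness-find either to a block-pair or to the entry it resolves.

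The step I expect to be the main obstacle is enforcing that a size-$\Oh(\sqrt n)$ \CoinC call genuinely computes $\min_{i \in \text{block}} T[i] + w_{j-i}$ and nothing else, i.e., confining the optimal multiset to use \emph{exactly one} ``real'' transition coin plus the encoded prefix, while the remaining budget is padded out to hit the required target denomination exactly. This requires careful construction of the denomination layout and a gap argument on the weights (analogous to the $2M, 4M$ gap bookkeeping in Lemma~\ref{lem:convTooiCC}) so that any deviation from the intended structure is penalized beyond the true optimum. The parameter blow-up must be controlled: the factor $3n^2W$ on the weight range in the statement signals that the encoding multiplies weights by about $n$ twice (once to embed indices/prefix values, once for the padding/gap), and the factor $24\sqrt n$ on the size reflects the constant-factor expansion from padding each subinstance to a fixed target denomination.

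Once $\probm$-style batch queries are available, the overall bound $T^\oiCoinC(n,W) \le \Oh(\log(nW) \cdot n \cdot T^\CoinC(24\sqrt n, 3n^2 W))$ follows by the same accounting as before: $\Oh(n)$ witness-finding calls (charged to block-pairs and resolved entries) of size $\Oh(\sqrt n)$, each wrapped in an $\Oh(\log(nW))$ binary search over the weight value, combined with the $\Oh(n)$ blockwise-sequential passes that thread the already-computed $T$-values into later subinstances. The sequential dependency — later blocks need earlier blocks' answers — is what forces the ordered, non-fully-parallel execution, but it does not increase the asymptotic call count.
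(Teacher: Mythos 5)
Your overall architecture matches the paper's: split $[n]$ into $\Theta(\sqrt n)$ consecutive denomination blocks, compute the $T$-table block by block, and within each block run a parallel binary search whose batch threshold queries are answered by $\Oh(\sqrt n)$-sized \CoinC calls, with witness-finding calls charged either to block pairs or to the entries they resolve. The gadget you anticipate as the ``main obstacle'' is essentially what the paper constructs: each (coin block, prefix block) pairing is first encoded as a \mpconv instance, pushed through the reduction of Lemma~\ref{lem:convTooiCC} to obtain an \oiCoinC instance of size $12N$, and then extended by artificial ``threshold coins'' of weight $-3M\bar{W}-\tilde{r}_i$ so that a single \CoinC call of size $24N$ decides (and, after an index-scaling trick, locates) whether any table entry drops below its prescribed threshold.

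There is, however, one concrete coverage issue your decomposition does not address. You propose to witness $T[i]$ for $i$ in the current block $\rangeW_j$ by pairing one ``real'' coin from an earlier block with an already-computed prefix value. Let $k$ be the block containing the largest-denomination coin of an optimal multiset for $i$; the remaining denomination $i-c$ with $c\in\rangeW_k$ then lands in $\rangeW_{j-k}\cup\rangeW_{j-k+1}$. For $k\ge 2$ these prefix values are already known, but for $k=1$ (all coins of small denomination) the remainder can land in $\rangeW_{j-1}\cup\rangeW_j$, i.e.\ partly inside the block you are still computing, so the query you describe cannot be formed. The paper repairs this with a separate gadget for $k=1$ that pairs two already-computed $T$-values, namely $T[i']$ with $i'\in\rangeW_{j-1}$ and $T[i-i']$ with $i-i'\in\rangeW_1\cup\rangeW_2$ (which is also why the first two blocks are precomputed naively in time $\Oh(n)$). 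Without this case your search misses valid witnesses whenever the optimal multiset uses only small coins, and the reported $T[i]$ can be too large; the rest of your plan, including the parameter bounds $24\sqrt n$ and $3n^2W$ and the $\Oh(\log(nW))$-round accounting, is consistent with the paper's proof.
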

\begin{proof}
Let $\instI$ be an \oiCoinC instance. 
To define our subproblems, we set $N:= \lceil \sqrt{n} \rceil$  and define $N$ ranges $\rangeW_1 := \{1, \dots, N\}$, $\dots$, $\rangeW_N := \{(N-1)N + 1, \dots, N^2\}$. To determine all $T[i] = \min_{S: d(S) = i} w(S)$, we will compute $T[i]$ for all $i \in \rangeW_j$ successively over all $j =1, \dots, N$. The case of $j=1$ and $j=2$ can be computed by the naive algorithm in time $\Oh(N^2) = \Oh(n)$. Consider now any fixed $j\ge 3$ and assume that all values $T[i]$ for $i\in \rangeW_{j'}$ with $j' < j$ have already been computed. We employ a parallel binary search. For every $i \in \rangeW_j$, we set up a feasible range $\range_i$ initialized to $\{-nW,\dots,nW\}$. We will maintain the invariant that $T[i] \in \range_i$ and will halve the size of all feasible ranges $\range_i, i \in \rangeW_j$ simultaneously using a small number of calls to the following problem $\prob(M,\bar{W})$: Given an instance~$\instJ$ for \CoinC specified by the weights $\tilde{w}_1,\dots,\tilde{w}_M$, as well as values $\tilde{r}_1,\dots, \tilde{r}_M \in \{-\bar{W}, \dots, \bar{W}\} \cup \{-\infty, \infty\}$, determine whether there exists an $i\in [M]$ with $T^\instJ[i] \le \tilde{r}_i$, and if so, also return a witness $i$. We will later prove that this problem can be solved in time $T^{\prob}(M, \bar{W}) = \Oh(T^\CoinC(2M,3M^2\bar{W}))$. Clearly, after $\Oh(\log(nW))$ rounds of this parallel binary search, the feasible ranges consists of single values, thus determining the values of all $T[i]$ for $i\in \rangeW_j$. Since we will show that halving all feasible ranges for range $\rangeW_j$ takes $\Oh(N)$ calls to $\prob(12N, nW)$, and we need to determine at most $N$ ranges $\rangeW_3, \dots, \rangeW_N$, the total time for this process amounts to $\Oh(\log(nW) N^2 \cdot T^{\prob}(12N, nW)) = \Oh(\log(nW) N^2 \cdot  T^\CoinC(24N, 3n^2W))$.

We now describe how to use $\prob$ to halve the size of all feasible ranges $\range_i, i \in \rangeW_j$: we set $r_i$ to the median of $\range_i$ and aim to determine, for all $i\in \rangeW_j$, whether $T[i]\le r_i$, i.e., whether some multiset $S$ with $d(S) = i$ and $w(S) \le r_i$ exists. We achieve this by the following process: For every $k=1, \dots, j$, we consider only two ranges, namely $\rangeW_k = \{(k-1)N+1, \dots, kN\}$ and $\rangeW_{j-k} \cup \rangeW_{j-k+1} = \{(j-k-1)N+1,\dots,(j-k+1)N\}$.
Let us first consider the case $k\ge 2$. Here, we can define the $2N$-dimensional vectors $a,b$ with
\begin{align*}
a_\ell & = \begin{cases} w_{(k-1)N + \ell} & \text{for } \ell \in [N], \\ \infty & \text{for } \ell > N, \end{cases}\\
b_\ell & = T[(j-k-1)N + \ell] \qquad \text{for } \ell \in [2N].
\end{align*} 
(Note that all $T[i], i \in \rangeW_{j-k} \cup \rangeW_{j-k+1}$ for $k\ge 2$ have already been computed by assumption.)
We are interested in all those values of the \mpconv $a \conv b$ of these vectors that correspond to summing up some $w_{(k-1)N + \ell}$ with some $T[(j-k-1)N+\ell']$ such that $(j-2)N + \ell + \ell' \in \rangeW_j$. More specifically, we aim to determine whether there is some $\ell$ with $(a \conv b)_{N+\ell} \le r_{(j-1)N+\ell}$. To do so, we use the reduction from \mpconv to \oiCoinC given in Lemma~\ref{lem:convTooiCC} to create an \oiCoinC instance $\instJ$. From this instance of problem size $12N$ we can read off the values of $a\conv b$ as a certain interval in the corresponding $T^\instJ$-table. Thus, we can test whether $(a \conv b)_{N+\ell} \le r_{(j-1)N+\ell}$ for some $\ell$ using $\prob(12N, nW)$: for every $\ell$, we let $i$ be the unique index in the $T^\instJ$-table representing the entry $(a\conv b)_{N+\ell}$ and set $\tilde{r}_i := r_{(j-1)N+\ell}$. For all other $i'$, we set $\tilde{r}_{i'} = -\infty$, thus enforcing that those indices will never be reported.

For the special case $k=1$, we proceed slightly differently: Here, we define the $2N$-dimensional vectors $a,b$ with
\begin{align*}
a_\ell & = T[\ell] \qquad \text{for } \ell \in [2N]\\
b_\ell & = \begin{cases} T[(j-2)N + \ell] & \text{for } \ell \in [N]\\ \infty & \text{for } \ell > N. \end{cases}
\end{align*} 
(Note that all necessary $T[i], i\in \rangeW_1\cup\rangeW_2$ and $T[i],i\in \rangeW_{j-1}$ have already been computed by assumption.) 
Analogously to above, we use $\prob(12N, nW)$ to test whether $(a\conv b)_{N+\ell} \le r_{(j-1)N+\ell}$ using the reduction from \mpconv to \oiCoinC given in Lemma~\ref{lem:convTooiCC}.

Once an $i\in \rangeW_j$ has been reported to satisfy $T[i] \le r_i$ for some witnessing subproblem given by the ranges $\rangeW_k$ and $\rangeW_{j-k} \cup \rangeW_{j-k+1}$ for some $k$, we set $r_i := -\infty$ and repeat on the same subproblem $k$ (analogously to the approach of Lemma~\ref{lem:allToMinInn}). Note that for every $j$, we have $j\le N$ subproblems and at most $N$ many indices $i\in \rangeW_j$ that can be reported. Thus, we use at most $\Oh(N)$ many calls to the subproblem $\prob$.

To briefly argue correctness, note that by construction, we only determine some $i$ with $T[i] \le r_i$ if we have found a witness. For the converse, let $k$ be the largest index such that the optimal multiset for $i$ includes a coin in $\rangeW_k$. Then the subproblem given by the ranges $\rangeW_k$ and $\rangeW_{j-k} \cup \rangeW_{j-k+1}$ will give a witness. This is obvious for $k\ge 2$. For $k=1$, note that no weight in $\rangeW_{k'}$ with $k'>1$ is used in an optimal multiset for $T[i]\in \rangeW_j$. In particular, the optimal multiset $S$ can be represented as $S= S' \cup S''$, where $S'$ is a multiset of total denomination $i'\in \rangeW_{j-1}$ and $S''$ is a multiset of total denomination $i-i' \in \rangeW_{1}\cup \rangeW_{2}$. Thus, in the instance constructed from $a,b$, we will find the witness $T[i] \le T[i'] + T[i-i'] \le r_i$.

We finally describe how to solve $\prob(M, \bar{W})$ in time $T^\CoinC(2M, 3M^2\bar{W})$. First consider the problem \emph{without finding a witnessing $i$}. Let $\tilde{w}_1, \dots, \tilde{w}_M, \tilde{r}_1, \dots, \tilde{r}_M$ be an instance $\instJ$ of $\prob(M, \bar{W})$. We  define a \CoinC instance $\instK$ of problem size $2M$ by giving the weights
\begin{align*}
w'_i & := \tilde{w}_i & \text{for all } i\in [M],\\
w'_{2M-i} & := -3M\bar{W} - \tilde{r}_i & \text{for all } i \in [M].
\end{align*}
We claim that $T^\instK[2M] \le -3M\bar{W}$ iff the input instance to $\prob$ is a yes instance: First observe that $T^\instK[1]=T^\instJ[1], \dots, T^\instK[M] =T^\instJ[M]$ since the first $M$ weights agree for both $\instJ$ and $\instK$. Consider the case that there is some $i\in [M]$ with $T^\instJ[i] \le \tilde{r}_i$. Then we have $T^\instK[2M] \le T^\instK[i] + w_{2M-i} = (T^\instJ[i] - \tilde{r}_i) - 3M\bar{W} \le -3M\bar{W}$, as desired. Conversely, assume that all $T^\instJ[i] > \tilde{r}_i$. We distinguish the cases whether the optimal subsequence $S$ uses only weights among $\tilde{w}_1, \dots, \tilde{w}_M$ or not. In the first case, since $|\tilde{w}_i| \le W$ for $i\in [M]$, we have that $w(S) \ge 2M \cdot \min_{i\in [n]} |\tilde{w}_i|\ge -2M\bar{W} > -3M\bar{W}$. Otherwise, $S$ uses exactly one weight among $\tilde{w}_{M+1}, \dots, \tilde{w}_{2M}$. Let this weight be $\tilde{w}_{2M-i}$. Then $w(S) = T^\instK[i] + \tilde{w}_{2M-i} = (T^\instJ[i] - \tilde{r}_i) - 3M\bar{W} > - 3M\bar{W}$ since $T^\instJ[i] > \tilde{r}_i$, yielding the claim. 

Very similar to Lemma~\ref{lem:allToMinInn}, we can now tune the above reduction to also produce a witness $i$ such that $T^\instJ[i] \le \tilde{r}_i$. For this, we scale all weights $w'_i, i\in [2M]$ by a factor of $M$ and subtract a value of $i-1$ for every $w'_i, i\in [M]$. It is easy to see that a yes instance $\instK$ attains some value $T^\instK[2M] = -\kappa \cdot M - i$ for some integers $\kappa \ge 3$ and $0\le i<n$, where $i+1$ is a witness for $T^\instJ[i+1] \le \tilde{r}_{i+1}$, thus computing $T^\instK[2M]$ lets us derive a witness as well. Thus, problem $\prob$ can be solved by a single call to $T^\CoinC(2M, 3M^2\bar{W})$.
\end{proof}

The results above prove the following theorem.
\begin{theorem}\label{thm:coinchange}
We have $\mpconvmath \eqquad \CoinC \eqquad \UKnapsack$. Furthermore, the bounded version of \Knapsack admits no strongly subquadratic-time algorithm unless \mpconv can be solved in strongly subquadratic time.
\end{theorem}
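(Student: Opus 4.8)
The plan is to assemble the equivalences already established into a single cycle of subquadratic reductions for the three-way statement, and then to observe that the instances produced along the way never benefit from repetition, so that the \emph{same} reductions certify hardness of the bounded problem.

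For the three-way equivalence I would first combine Lemma~\ref{lem:convTooiCC} with Lemma~\ref{lem:oiCoinCtompconv} to obtain $\mpconvmath \eqquad \oiCoinC$: the forward direction $\mpconvmath \lequad \oiCoinC$ is Lemma~\ref{lem:convTooiCC}, and the reverse $\oiCoinC \lequad \StaticLWS(\weightmat_\coinC) \lequad \mpconvmath$ is Lemma~\ref{lem:oiCoinCtompconv}. Next I would close the cycle
\[ \oiCoinC \lequad \CoinC \lequad \UnboundedKnapsack \lequad \oiCoinC, \]
where the first reduction is Lemma~\ref{lem:oiCCtoCC} and the last two are Observation~\ref{obs:Uknap}. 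Since $\lequad$ is transitive, every problem on this cycle is subquadratically equivalent to every other, and together with $\mpconvmath \eqquad \oiCoinC$ this yields $\mpconvmath \eqquad \CoinC \eqquad \UKnapsack$. The only thing to check is that the parameter blow-ups stay within the regime allowed by $\lequad$: each single reduction increases the instance size only by a constant factor (e.g.\ $n \mapsto 6n$ in Lemma~\ref{lem:convTooiCC}) or splits into $\tOh(n)$ calls on $\Oh(\sqrt{n})$-size subinstances (Lemma~\ref{lem:oiCCtoCC}), while inflating $W$ only polynomially; as we assume $W=\poly(n)$ (indeed up to $2^{n^{o(1)}}$), an $\Oh(m^{2-\varepsilon})$ algorithm for any one problem composes into an $\Oh(n^{2-\delta})$ algorithm for the next. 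This bookkeeping is routine.

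For the bounded statement I would prove $\mpconvmath \lequad \Knapsack$ by reusing the reduction from \mpconv into the coin-change world and verifying that repetition never helps. Concretely, composing Lemma~\ref{lem:convTooiCC}, Lemma~\ref{lem:oiCCtoCC} and Observation~\ref{obs:Uknap} gives a reduction from \mpconv that issues $\tOh(n)$ calls to (unbounded) \UnboundedKnapsack on $\Oh(\sqrt{n})$-size instances; note that a naive query-by-query extraction of the $2n-1$ convolution entries would cost $\tOh(n)$ calls on $\Oh(n)$-size instances and hence be too slow, so this more careful chain is essential. The key observation is that every instance arising this way is \emph{set-optimal}: an optimum can always be attained using each item size at most once. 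This is transparent in the base reduction of Lemma~\ref{lem:convTooiCC}, where each interesting entry $T[4n+i]=(a\conv b)_{2n-i}$ is realized by one coin from the $a$-block and one from the $b$-block, which sit at distinct denominations, while any solution touching a padding coin (of weight $4M$) or repeating a real coin has weight at least $4M \ge (a\conv b)_{2n-i}$ and is therefore never strictly better. On set-optimal instances the bounded and unbounded optima coincide, so each \UnboundedKnapsack call may be replaced by a \Knapsack call without changing its answer, and a strongly subquadratic algorithm for \Knapsack yields one for \mpconv.

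The main obstacle is precisely this set-optimality claim. While immediate for the base reduction, it must be \emph{propagated} through the output-intensive-to-single-value reduction of Lemma~\ref{lem:oiCCtoCC} and the coin-change-to-knapsack translation of Observation~\ref{obs:Uknap}: one has to check that the auxiliary ``large'' coins (the $-3M\bar{W}-\tilde{r}_i$ gadgets in the $\prob$ subroutine) and the size/profit scaling of Observation~\ref{obs:Uknap} are each used at most once in an optimum, and that they never force a real coin to be repeated. Since the large coins have denominations exceeding the targets with which they are combined, and the convolution gadget already forces distinctness among the real coins, I expect this to go through by a careful but elementary case analysis rather than any new idea; the effort lies entirely in the bookkeeping.
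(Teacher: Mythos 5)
Your proposal is correct and follows essentially the same route as the paper: assembling $\mpconvmath \eqquad \oiCoinC$ from Lemmas~\ref{lem:convTooiCC} and~\ref{lem:oiCoinCtompconv}, closing the cycle $\oiCoinC \lequad \CoinC \lequad \UnboundedKnapsack \lequad \oiCoinC$ via Lemma~\ref{lem:oiCCtoCC} and Observation~\ref{obs:Uknap}, and obtaining the bounded-\Knapsack{} hardness by observing that all instances produced along the reduction chain are set-optimal. Your explicit verification of set-optimality in the base gadget and its propagation through the $\prob$ subroutine is in fact more detailed than the paper's one-line ``by inspection'' remark.
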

\begin{proof}
Lemmas~\ref{lem:convTooiCC} and ~\ref{lem:oiCoinCtompconv} prove $\mpconvmath \eqquad \oiCoinC$, while Observation~\ref{obs:Uknap} and Lemma~\ref{lem:oiCCtoCC} establish $\oiCoinC \eqquad \CoinC \eqquad \UnboundedKnapsack$, yielding the first claim.

The second claim follows from inspecting the proofs of Lemma~\ref{lem:convTooiCC}, Lemma~\ref{lem:oiCCtoCC} and the first claim of Observation~\ref{obs:Uknap} and observing that we only reduce to \CoinC/\Knapsack instances in which the optimal multiset (for each total size) is always a set, i.e., uses each element at most once.
\end{proof}

\section{Chain LWS}
\label{sec:chains}

In this section we consider a special case of of Least-Weight
Subsequence problems called the Chain Least-Weight Subsequence. This
captures problems in which edge weights are given implicitly by a
relation $\rel$ that determines which pairs of data items we are
allowed to chain -- the aim is to find the longest chain.

An example of a Chain Least-Weight Subsequence problem is the
\NestedBoxes problem. Given $n$ boxes in $d$ dimensions,
given as non-negative, $d$-dimensional vectors $b_1, \ldots, b_n$,
find the longest chain such that each box fits into the next (without
rotation). We say box that box $a$ fits into box $b$ if for all
dimensions $1 \leq i \leq d$, $a_i \leq b_i$.

\NestedBoxes is not immediately a least-weight subsequence problem, as
for least weight subsequence problems we are given a sequence of data
items, and require any sequence to start at the first item and end at
the last. 
We can easily convert \NestedBoxes into a \LWS problem by sorting the
vectors by the sum of the entries and introducing two special boxes,
one very small box $\bot$ such that $\bot$ fits into any box $b_i$ and
one very large box $\top$ such that any $b_i$ fits into $\top$.

We define the chain least-weight subsequence problem with respect to
any relation $\rel$ and consider a weighted version where data items
are given weights. To make the definition consistent with the
definition of $\LWS$ the output is the weight of the sequence that
minimizes the sum of the weights.

\begin{problem}[\CLWS]
Fix a set of objects $X$ and a relation $\rel \subseteq X\times X$. 
We define the following \LWS instantiation $\CLWS(\rel) = \LWS(\weightmat_{\CLWS(R)})$.\\
\Data{sequence of objects $x_0, \dots, x_n \in X$ with weights $w_1, \dots, w_n \in \{-W,\ldots,W\}$.} 
\Weights{$w_{i,j} = \begin{cases} w_j & \text{if } (x_i, x_j) \in \rel, \\ \infty & \text{otherwise},\end{cases}$ for $0\le i < j \le n$.}
\end{problem}

The input to the (weighted) chain least-weight subsequence problem is
a sequence of data items, and not a set. Finding the longest chain in
a set of data items is $\NP$-complete in general. For example,
consider the box overlap problem: The input is a set of boxes in two
dimensions, given by the top left corner and the bottom right corner,
and the relation consists of all pairs such that the two boxes
overlap. This problem is a generalization of the Hamiltonian path
problem on induced subgraphs of the two-dimensional grid, which is an
$\NP$-complete problem \cite{ItaiPS82}.

We relate $\CLWS(\rel)$ to the class of \emph{selection} problems with
respect to the same relation $\rel$. 

\begin{problem}[Selection Problem]
  Given data items $a_1, \ldots, a_n, b_1, \ldots, b_n$ and a relation
  $\rel(a_i, b_j)$, determine if there is a pair $i,j$ satisfying
  $\rel(a_i, b_j)$. We denote this selection problem with respect to a
  relation $\rel$ by $\Selection(\rel)$.
\end{problem}

The class of selection problems includes several well studied problems
including \MinInnProd, \OV \cite{Williams05, AbboudWY15} and
\VectorDomination \cite{ImpagliazzoLPS14}.

We will use the selection problems in the search variant, where we
find a pair satisfying the $\rel$ if such a pair exists. To reduce the
the search variant to the decision variants in a fine-grained way, we
can use a simple, binary search type reduction from the decision
problem to the search problem:

We give a subquadratic reduction from $\CLWS(\rel)$ to $\Selection(\rel)$ that is
independent of $\rel$. 

\begin{theorem}
  \label{thm:LWStoSel}
  For all relations $\rel$ such that $\rel$ can be computed in time
  subpolynomial in the number of data items $n$,
  $\CLWS(\rel) \le_2 \Selection(\rel)$.
\end{theorem}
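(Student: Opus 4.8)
The plan is to combine the two reductions already developed in the
paper: the divide-and-conquer reduction $\LWS(\weightmat) \lequad
\StaticLWS(\weightmat)$ from Lemma~\ref{lem:dynToStatic}, and a
problem-specific reduction from the static subproblem to the
corresponding selection problem. Since $\CLWS(\rel) =
\LWS(\weightmat_{\CLWS(\rel)})$, it suffices to show $\StaticLWS(\weightmat_{\CLWS(\rel)})
\lequad \Selection(\rel)$; chaining this with
Lemma~\ref{lem:dynToStatic} yields the theorem. So the core of the
argument is to solve a single instance of \StaticLWS for the chain
weight matrix using a subquadratic-time algorithm for $\Selection(\rel)$.

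First I would recall what \StaticLWS demands here: given intervals $I$
and $J$ of size $N$ each, together with correctly computed values
$T[i]$ for $i\in I$, we must compute $T'[j] = \min_{i\in I} T[i] +
w_{i,j}$ for all $j\in J$. For the chain weight matrix, $w_{i,j} = w_j$
whenever $(x_i,x_j)\in \rel$ and $\infty$ otherwise, so $T'[j] = w_j +
\min\{ T[i] : i\in I,\ (x_i,x_j)\in\rel\}$. Thus the quantity that
actually varies with the chosen predecessor is just $T[i]$, and the
feasibility of a predecessor $i$ for a target $j$ is governed purely by
$\rel(x_i,x_j)$. This is exactly the structure a selection oracle can
exploit: I would run a \emph{parallel binary search} over the values
$\min\{ T[i] : i\in I,\ \rel(x_i,x_j)\}$ for all $j$ simultaneously, in
the same style as the proof of Lemma~\ref{lem:allToMinInn}. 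Concretely,
for each $j\in J$ maintain a candidate threshold $\tau_j$ and ask: is
there an $i\in I$ with $T[i]\le \tau_j$ and $\rel(x_i,x_j)$? To turn a
single such batched query into a selection instance, I would augment
the data items with their $T$-values: sort or bucket the $i\in I$ by
$T[i]$ and, for a given threshold, restrict the left-hand set to those
$i$ with $T[i]\le \tau_j$, then invoke $\Selection(\rel)$ on the pair of
augmented item sets to detect feasible pairs. Using the witness-finding
and repeated-deletion trick of Lemma~\ref{lem:allToMinInn} — after a
witness $i$ for some $j$ is found, record it, and continue — lets each
selection call make progress, and the total number of calls stays
near-linear.

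The main obstacle I expect is making the thresholding in the selection
query \emph{compatible with an arbitrary relation} $\rel$, since the
reduction must be independent of $\rel$ (the theorem quantifies over all
$\rel$ computable in subpolynomial time). Unlike \MinInnProd, where a
numeric threshold folds naturally into an extra coordinate of the inner
product, here $\rel$ is an opaque predicate, so I cannot simply encode
$T[i]\le \tau_j$ into the relation itself. The clean way around this is
to separate the numeric comparison from the relational test: group the
indices $i\in I$ by the bits of $T[i]$ (or process thresholds via a
binary search that successively halves the feasible range for each
$\min$-value), and for each fixed numeric regime call $\Selection(\rel)$
only on the relevant sub-collection of items, so the selection oracle is
only ever asked the pure relational question $\rel(x_i,x_j)$. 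A careful
accounting — $g^2$ group-pairs plus one charge per discovered witness,
exactly as in Lemma~\ref{lem:allToMinInn}, together with $\Oh(\log(nW))$
binary-search rounds — shows the overhead is subquadratic whenever
$\Selection(\rel)$ is, which combined with Lemma~\ref{lem:dynToStatic}
establishes $\CLWS(\rel)\lequad \Selection(\rel)$ and hence the claimed
$\le_2$ reduction.
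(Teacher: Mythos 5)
Your high-level architecture matches the paper's: reduce $\CLWS(\rel)$ to the static problem via Lemma~\ref{lem:dynToStatic}, then solve the static problem with a $\sqrt{N}$-grouping trick in the style of \cite{VassilevskaWW10}. But the heart of the argument --- how a single $\Selection(\rel)$ call can serve a whole batch of targets $j$ --- is exactly the part you leave unresolved. In your parallel binary search, each $j\in B_\ell$ carries its own threshold $\tau_j$, so the query ``is there $i$ with $T[i]\le\tau_j$ and $\rel(x_i,x_j)$'' restricts the left-hand set \emph{per target}: within one batched call on a group pair $(A_k,B_\ell)$ there is no single left-hand set to hand to the oracle, and since $\rel$ is an opaque predicate you cannot absorb the threshold into the relation the way Lemma~\ref{lem:allToMinInn} absorbs $r_j$ into an extra inner-product coordinate. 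Your proposed fix (``group the indices $i$ by the bits of $T[i]$ and call $\Selection(\rel)$ on the relevant sub-collection'') does not specify which sub-collection, because the relevant prefix of the sorted left list still differs from one $j$ to the next inside the same batch. This is a genuine gap, not a routine detail: it is precisely the obstacle that forces the construction to differ from the \MinInnProd{} case.

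The paper's resolution is simpler and avoids binary search entirely. Since $T'[j]=w_j+\min\{T[i]: i\in I,\ \rel(x_i,x_j)\}$ and only the \emph{minimum} $T[i]$ matters, one sorts the items of $I$ by $T[i]$, partitions both sides into $\sqrt{N}$ consecutive groups, and iterates over group pairs $(A_k,B_\ell)$ in lexicographic order. Each oracle call only ever answers the pure relational question; when it reports a witness $(a_i,b_j)$, a linear scan inside $A_k$ finds the \emph{first} $a_{i^*}$ with $\rel(a_{i^*},b_j)$, and the sorted order plus lexicographic processing guarantee $T'[j]=T[i^*]+w_j$; then $b_j$ is deleted and the call repeated. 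The charging argument (one failed call per group pair, one successful call plus one $\Oh(\sqrt{N})$ scan per deleted $j$) gives $\Oh\bigl(N\cdot T^{\Selection}(\sqrt{N})\bigr)$. If you want to salvage your route, you would need to spell out how heterogeneous thresholds are batched --- e.g., by binary searching over \emph{ranks} in the sorted order and handling the per-$j$ partial block separately --- but the rank-based view collapses to ``find the first feasible $i$ in sorted order,'' which is the paper's argument without the extra $\log(nW)$ machinery.
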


The proof is again based on \StaticLWS and a variation on a trick
of~\cite{VassilevskaWW10}.

As an intermediate step, we define $\StaticCLWS$ as the equivalent
of $\StaticLWS$ in the special case for chains.

\begin{problem}[\StaticCLWS]
 Fix an instance of $\CLWS(\rel)$. Given intervals
 $I:=\{a+1, \dots, a+N\}$ and $J:=\{a+N+1,\dots, a+2N\}$ for some $a$
 and $N$, together
 with the correctly computed values $T[a+1],\dots,T[a+N]$, the
 Static Chain Least-Weight Subsequence Problem (\StaticCLWS) asks to
 determine
 \begin{align*}
   T'[j] & := \min_{i \in I: \rel(i,j)} T[i] + w_{j}  & \text{for all } j\in J. 
 \end{align*}
\end{problem}

Similar to the definition of $\CLWS$, $\StaticCLWS$ is the special case of
\StaticLWS where the the weights $w_{i,j}$ are restricted to be either
$w_j$ or $\infty$, depending on $\rel$. As a result, Lemma \ref{lem:dynToStatic} applies
directly.

\begin{corollary}[$\CLWS(\rel) \lequad\StaticLWS(\rel)$]
 \label{lem:chainDynToStatic}
 For any $\rel$, if $\StaticCLWS(\rel)$ can be solved in
 time $\Oh(n^{2-\varepsilon})$ for some $\varepsilon>0$, then
 $\CLWS(\rel)$ can be solved in time $\tOh(n^{2-\varepsilon})$.
\end{corollary}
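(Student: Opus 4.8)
The plan is to recognize that $\StaticCLWS(\rel)$ is nothing more than the generic $\StaticLWS(\weightmat)$ problem specialized to one particular weight matrix, namely $\weightmat = \weightmat_{\CLWS(\rel)}$, and then to invoke Lemma~\ref{lem:dynToStatic} as a black box. Since by definition $\CLWS(\rel) = \LWS(\weightmat_{\CLWS(\rel)})$, the entire content of the corollary reduces to checking that the static problem attached to this weight matrix is syntactically the same as $\StaticCLWS(\rel)$; no new algorithmic idea is needed beyond the divide-and-conquer reduction already established in Lemma~\ref{lem:dynToStatic}.

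First I would unfold the definition of $\StaticLWS(\weightmat_{\CLWS(\rel)})$. Given intervals $I$ and $J$ together with the values $T[i]$ for $i\in I$, it asks to compute $T'[j] = \min_{i\in I} T[i] + w_{i,j}$ for each $j\in J$. Substituting the chain weights $w_{i,j} = w_j$ when $(x_i,x_j)\in\rel$ and $w_{i,j}=\infty$ otherwise, the terms arising from pairs with $(x_i,x_j)\notin\rel$ contribute $\infty$ and therefore drop out of the minimum, leaving exactly $T'[j] = \min_{i\in I:\, \rel(i,j)} T[i] + w_j$. This is verbatim the quantity defining $\StaticCLWS(\rel)$, so the two problems coincide: $\StaticCLWS(\rel) = \StaticLWS(\weightmat_{\CLWS(\rel)})$.

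With this identification in hand, I would apply Lemma~\ref{lem:dynToStatic} with $\weightmat = \weightmat_{\CLWS(\rel)}$: an $\Oh(N^{2-\varepsilon})$-time algorithm for $\StaticLWS(\weightmat_{\CLWS(\rel)}) = \StaticCLWS(\rel)$ yields, by that lemma, an $\tOh(n^{2-\varepsilon})$-time algorithm for $\LWS(\weightmat_{\CLWS(\rel)}) = \CLWS(\rel)$, which is precisely the claim.

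There is essentially no genuine obstacle here; the one point requiring a moment's care is ensuring that the reduction underlying Lemma~\ref{lem:dynToStatic} can still evaluate individual chain weights $w_{k,j}$ in $\tOh(1)$ time in its base cases and in its residual-computation step. This is guaranteed by the standing assumption (inherited from Theorem~\ref{thm:LWStoSel}) that $\rel$ is decidable in time subpolynomial in $n$, so each $w_{k,j}$ is computable within the $\tOh(\cdot)$ overhead already absorbed by the lemma. Hence the corollary follows directly, as the surrounding text indeed notes.
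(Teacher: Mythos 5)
Your proof is correct and matches the paper's own treatment exactly: the paper likewise observes that $\StaticCLWS(\rel)$ is just $\StaticLWS(\weightmat_{\CLWS(\rel)})$ and invokes Lemma~\ref{lem:dynToStatic} directly. Your extra remark about needing each $w_{i,j}$ (i.e., membership in $\rel$) to be evaluable in subpolynomial time is a sensible bit of added care that the paper leaves implicit.
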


We now reduce $\StaticCLWS(\rel)$ to $\Selection(\rel)$ with a variation on
the trick by \cite{VassilevskaWW10}.

\begin{lemma}[$\StaticCLWS(\rel) \lequad \Selection(\rel)$]
 For all relations $\rel$ such that $\rel$ can be computed in time
 subpolynomial in the number of data items $n$, $\StaticCLWS(\rel)
 \lequad \Selection(\rel)$.
\end{lemma}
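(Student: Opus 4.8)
The plan is to reduce $\StaticCLWS(\rel)$ to $\Selection(\rel)$ by adapting the witness-finding and grouping strategy used in Lemma~\ref{lem:allToMinInn}. The essential observation is that $\StaticCLWS$ asks us to compute, for every $j\in J$, the value $T'[j] = \min_{i\in I:\,\rel(x_i,x_j)} T[i] + w_j$. Since $w_j$ is a fixed additive offset depending only on $j$, the computational content is entirely in determining, for each $j$, the minimum $T[i]$ over all $i\in I$ with $\rel(x_i,x_j)$. I would therefore first reduce the problem to the following \emph{minimum-predecessor} task: for every $j\in J$, find some $i\in I$ achieving $\min_{i:\rel(x_i,x_j)} T[i]$ (or report that no valid $i$ exists); adding $w_j$ afterwards is free. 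Note that because the $T[i]$ are known exact integers, I can fold a threshold on $T[i]$ directly into $\Selection$ by augmenting the relation: I would define, for a threshold $r$, an augmented relation that holds for the pair $(i,j)$ exactly when $\rel(x_i,x_j)$ \emph{and} $T[i]\le r$. This augmentation is subpolynomial to evaluate given the hypothesis on $\rel$, so a $\Selection(\rel)$-solver yields a solver for the augmented selection problem with only a constant-factor increase in dimension/description size.

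Next I would upgrade the decision version of $\Selection$ to a search (witness-finding) version, exactly as in the opening paragraph of Lemma~\ref{lem:allToMinInn}: a single call deciding whether any valid pair exists can be bootstrapped, via the grouping argument, into a routine that finds witnesses. Concretely, I would partition $I$ into $g := \lceil\sqrt{N}\rceil$ groups $I_1,\dots,I_g$ and $J$ into groups $J_1,\dots,J_g$, each of size $\Oh(\sqrt{N})$, and iterate over all $\Oh(g^2) = \Oh(N)$ pairs of groups. For a fixed pair $(I_k,J_\ell)$ and a current threshold vector, I repeatedly call the augmented $\Selection$ on the $\Oh(\sqrt{N})$ items in each group; each successful call produces a witness $j$, at which point I record it, tighten (or deactivate) that $j$, and repeat on the same group pair until no further witness is found. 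The charging argument is identical to Lemma~\ref{lem:allToMinInn}: each group pair is charged one ``empty'' call, and each $j\in J$ is charged at most one ``productive'' call before it is deactivated, giving $\Oh(N)$ total calls each on instances of size $\Oh(\sqrt{N})$, hence a subquadratic reduction.

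Finally, to recover the actual \emph{values} $T'[j]$ rather than mere feasibility, I would wrap the witness-finding routine in a parallel binary search over the threshold $r$ for each $j$, just as $\probm$ is handled in Lemma~\ref{lem:allToMinInn}. Each $T[i]$ lies in a range of size $\Oh(nW)$, so $\Oh(\log(nW))$ rounds of halving the per-$j$ feasible intervals simultaneously suffice, with each round invoking the witness-finding routine once. Combining these pieces gives a bound of the shape $T^{\StaticCLWS(\rel)}(N) \le \Oh\bigl(N\log(nW)\cdot T^{\Selection(\rel)}(\sqrt{N})\bigr)$, which is subquadratic whenever $\Selection(\rel)$ is, establishing $\StaticCLWS(\rel) \lequad \Selection(\rel)$.

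I expect the main obstacle to be the augmentation of the relation by the threshold condition $T[i]\le r$: one must verify that this can be done within the $\Selection(\rel)$ framework without blowing up the instance description or the per-pair evaluation cost beyond subpolynomial, since $\Selection$ is stated abstractly for a fixed relation rather than a parameterized family. For the concrete instantiations of interest (\VectorDomination, \OV) this is immediate by appending coordinates that encode $T[i]$ and $r$ in the inner-product or domination test, but stating it cleanly at the level of a generic $\rel$ requires care in formalizing precisely which augmentations $\Selection(\rel)$ is assumed to support.
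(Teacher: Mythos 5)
Your reduction has a genuine gap, and it is exactly the one you flag at the end: the lemma is stated for an \emph{arbitrary} relation $\rel$, and the target of the reduction must be $\Selection(\rel)$ for that same relation. Your plan instead calls a selection oracle for an \emph{augmented} relation (``$\rel(x_i,x_j)$ and $T[i]\le r$''), which is a different problem; for a generic black-box $\rel$ there is no way to fold the threshold into the data items, and restricting to the prefix $\{i: T[i]\le r\}$ does not rescue the parallel binary search because the thresholds $r_j$ differ per $j$. So as written the argument only establishes $\StaticCLWS(\rel)\lequad \Selection(\rel')$ for an augmented $\rel'$, which suffices for the concrete corollaries (\VectorDomination, \OV) but not for the lemma as stated.

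The paper's proof avoids thresholds and binary search entirely, and this is the idea you are missing: sort the items of $I$ by increasing $T[i]$ \emph{before} forming the $\sqrt{N}$-sized groups, and iterate over group pairs $(A_k,B_\ell)$ in lexicographic order. Then, whenever the $\Selection(\rel)$ oracle reports some witness $(a_i,b_j)$ in the current group pair, a plain linear scan over the $\Oh(\sqrt{N})$ elements of $A_k$ finds the first $a_{i^*}$ related to $b_j$; since $b_j$ survived all earlier groups $A_{k'}$ with $k'<k$, this $i^*$ attains $\min_{i\in I:\rel(x_i,x_j)}T[i]$ exactly, and one sets $T'[j]=T[i^*]+w_j$ and deletes $b_j$. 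The same charging argument you describe (one empty call per group pair, one productive call per deleted $j$, plus an $\Oh(\sqrt{N})$ scan per productive call) then gives $\Oh\bigl(N\cdot T^{\Selection}(\sqrt{N})\bigr)$ with no $\log(nW)$ factor and, crucially, with oracle calls only to $\Selection(\rel)$ itself. If you replace your threshold/binary-search machinery with this sort-and-scan step, the rest of your grouping argument goes through.
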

\begin{proof}
 As a first step, we sort the data items $a_i, i\in I= \{a+1, \dots, a+N\}$ by
 $T[i]$ in increasing order and we will assume for the remainder of
 the proof that for all $a+1 \leq i < a+N$ we have
 $T[i] \leq T[i+1]$. We then split the set $a_{a+1},\ldots, a_{a+N}$
 into $g := \lceil \sqrt{N} \rceil$ groups $A_1, \ldots, A_g$ with
 $A_i = \{a_{(i-1)\lceil N/g \rceil}, \ldots, a_{i\lceil N/g \rceil -
   1}\}$.  We split the set $b_{a+N+1},\dots,b_{a+2N}$ into $B_1, \ldots, B_g$ in a similar
 fashion. We then iterate over all pairs $A_k, B_l$ with
 $k,l \in [g]$ in lexicographic order, and for each pair we do the
 following. Call the oracle for $\Selection(\rel)$ on the input
 $A_k, B_l$ to find a pair $a_i, b_j$ such that the relation $\rel$
 is satisfied on the pair. If there is no such pair, move to the next
 pair $A_{k^*}, B_{l^*}$ of sets of data items. If there is such a pair,
 find the first element $a_{i^*} \in A_k$ such that $R(a_{i^*}, b_j)$
 using a simple linear scan. As we first sorted $A$ and iterate over
 sets $A_k, B_l$ in lexicographic order, we have
 $T'[j] = T[i^*] + w_j$. We then remove $b_j$ from $B_l$ and repeat.

 For the runtime analysis, we observe, that the oracle can find a
 pair of elements at most $\Oh(N)$ times, as each time we find a pair
 we remove an element from the input. In the case where we do find a
 pair of elements we do a linear scan that takes $\Oh(N/g)$
 time. Furthermore, each pair of sets $A_k, B_l$ can fail to find a
 pair at most once. Hence, if $T^{\Selection}$ is the time to solve
 the selection problem and using $g = \sqrt{N}$ we get a time of
 \begin{equation}
   T(N) = N T^{\Selection}(\sqrt{N}) + N (T^{\Selection}(\sqrt{N}) + \sqrt{N})
   = N T^{\Selection}(\sqrt{N})
 \end{equation}
 which is subquadratic if $T^{\Selection}(N)$ is subquadratic.
\end{proof}

\begin{theorem}
  \label{thm:SelToLWS}
  Let $D$ be the set of possible data items. For any relation $\rel$ such that
  \begin{itemize}
  \item There is a data item $\bot$ such that $(\bot, d) \in \rel$ for all $d \in D$.
  \item There is a data item $\top$ such that $(d, \top) \in \rel$ for all $d \in D$.
  \item For any set of data items $d_1, \ldots, d_n$ there is a
    sequence $i_1, \ldots, i_n$ such that for any $j < k$,
    $(d_{i_j}, d_{i_k}) \not\in \rel$. This ordering can be computed in
    time $\Oh(n^{2-\delta})$ for $\delta > 0$. We call this ordering the
    natural ordering. 
  \end{itemize}
  Then $\Selection(\rel) \lequad \CLWS(\rel)$. 
\end{theorem}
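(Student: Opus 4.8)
The plan is to encode a $\Selection(\rel)$ instance $a_1,\dots,a_n,b_1,\dots,b_n$ as a single $\CLWS(\rel)$ instance whose optimal chain weight drops below a fixed threshold exactly when some pair satisfies $\rel$. The guaranteed universal source $\bot$ and sink $\top$ will serve as the forced endpoints $x_0$ and $x_{2n+1}$ of every chain, and in between I would place the $a$'s followed by the $b$'s. Concretely, using the natural ordering I reorder $a_1,\dots,a_n$ so that no earlier $a$ relates to a later $a$ and place them in positions $1,\dots,n$, and likewise reorder $b_1,\dots,b_n$ into positions $n+1,\dots,2n$; finally I set $x_0=\bot$ and $x_{2n+1}=\top$. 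I assign weight $w_j=-1$ to every $a$ and $b$ (positions $1,\dots,2n$) and $w_{2n+1}=0$ to $\top$, so that $W=1$ suffices.

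The crucial structural observation is that the natural ordering forces every valid chain to use at most one $a$ and at most one $b$: two $a$'s occupy positions $p<q$ that are both at most $n$, and since they appear in natural order we have $(x_p,x_q)\notin\rel$, so they cannot be consecutive in a chain; the same holds for the $b$'s, and all $a$'s precede all $b$'s in the sequence. Hence, apart from $\bot$ and $\top$, the only chains are $\bot\to\top$, $\bot\to a_i\to\top$, $\bot\to b_j\to\top$, and $\bot\to a_i\to b_j\to\top$, where the last exists precisely when $(a_i,b_j)\in\rel$ (all remaining edges exist by the source and sink hypotheses). Tracing the recurrence then gives $T[i]=-1$ for every $a$-position $i$, while for every $b$-position $j$ we get $T[j]=-2$ if some $a_i$ satisfies $(a_i,b_j)\in\rel$ and $T[j]=-1$ otherwise (reaching $b_j$ directly from $\bot$ only yields $-1$). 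Since everything relates to $\top$, we have $T[2n+1]=\min_{0\le i\le 2n}T[i]$, so $T[2n+1]=-2$ exactly when $\Selection(\rel)$ is a yes-instance and $T[2n+1]=-1$ otherwise (for $n\ge 1$). Thus a single $\CLWS(\rel)$ call, followed by testing whether $T[2n+1]\le -2$, decides $\Selection(\rel)$.

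For the running time, computing the two natural orderings costs $\Oh(n^{2-\delta})$ by hypothesis, and the rest of the construction is $\Oh(n)$; the produced instance has size $2n+1$, so an $\Oh(m^{2-\varepsilon})$ algorithm for $\CLWS(\rel)$ solves $\Selection(\rel)$ in time $\Oh(n^{2-\varepsilon})+\Oh(n^{2-\delta})=\Oh(n^{2-\min(\varepsilon,\delta)})$, establishing $\Selection(\rel)\lequad\CLWS(\rel)$. I expect the main point to verify carefully to be not the weight arithmetic but the claim that no \emph{spurious} low-weight chain exists: the reduction hinges entirely on the natural-ordering hypothesis ruling out chains that collect two $a$'s or two $b$'s (which would falsely realize weight $-2$), and on $\bot,\top$ guaranteeing that the genuine witness chain $\bot\to a_i\to b_j\to\top$ is always realizable. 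This is precisely where all three conditions on $\rel$ enter.
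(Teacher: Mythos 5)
Your proposal is correct and follows essentially the same route as the paper: both construct the sequence $\bot, a_1,\dots,a_n, b_1,\dots,b_n, \top$ with each block sorted by the natural ordering, use that ordering to rule out chains containing two $a$'s or two $b$'s, and read off a yes-instance from a fixed threshold on $T[2n+1]$. The only difference is cosmetic --- the paper assigns weight $-1$ to every item including $\top$ and tests against $-3$, whereas you give $\top$ weight $0$ and test against $-2$.
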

\begin{proof}
  We construct an unweighted $\CLWS$ problem with all weights set to
  $-1$, so that the problem is to find the longest chain. Let
  $a_1, \ldots a_n$ and $b_1, \ldots, b_n$ be the data items of
  $\Selection(\rel)$ and sort both sets according to the natural
  ordering. We claim that for the sequence of data items
  $\bot, a_1, \ldots a_n, b_1, \ldots, b_n, \top$ the weight of the
  least weight subsequence is $-3$ exactly if there is a pair
  $(a_i, b_j) \in \rel$. Because of the property of the natural
  ordering, any valid subsequence starting at $\bot$ and ending at
  $\top$ contains at most one element $a_i$ and at most one element
  $b_j$. If there is a pair $(a_i, b_j) \in \rel$, then the sequence
  $\bot, a_i, b_j, \top$ will have value $-3$. If there is no such
  pair, any valid sequence contains at most one element other than
  $\bot$ and $\top$ and its value is therefore at least $-2$.
\end{proof}

The proof is in the appendix.

In the rest of the section we give some interesting instantiations of
the subquadratic equivalence of $\Selection$ and $\CLWS$.

\begin{corollary}[$\NestedBoxes \eqquad \VectorDomination$]
\label{cor:nestedboxes}
  The weighted \NestedBoxes problem on $d = c \log n$ dimensions
  can be solved in time $n^{2-(1/\Oh(c \log^2 c))}$. For
  $d = \omega(log n)$, the (unweighted) \NestedBoxes problem cannot
  be solved in time $\Oh(n^{2-\varepsilon})$ for any $\varepsilon > 0$
  assuming $\SETH$.
\end{corollary}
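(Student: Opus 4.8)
The plan is to obtain both halves of the corollary as immediate consequences of the general equivalence between $\CLWS(\rel)$ and $\Selection(\rel)$ (Theorems~\ref{thm:LWStoSel} and~\ref{thm:SelToLWS}), instantiated with the $d$-dimensional domination relation $\rel$ defined by $\rel(a,b) \iff a_i \le b_i$ for all $i$. For this $\rel$ we have $\Selection(\rel) = \VectorDomination$ and $\CLWS(\rel) = \NestedBoxes$ by definition, so it suffices to verify the hypotheses of the two theorems and then track parameters. Since $\rel$ is decidable in time $\Oh(d) = \Oh(\log n)$, which is subpolynomial, Theorem~\ref{thm:LWStoSel} applies and gives $\NestedBoxes \lequad \VectorDomination$. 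For the reverse direction I would check the three conditions of Theorem~\ref{thm:SelToLWS}: after translating all input vectors by $(W,\dots,W)$ so that their entries lie in $\{0,\dots,2W\}$ (which preserves domination and makes the boxes non-negative as \NestedBoxes requires), the box $\bot = (0,\dots,0)$ fits into every box and $\top = (2W,\dots,2W)$ contains every box, and a \emph{natural ordering} is obtained by sorting the vectors in decreasing order of their coordinate sum $\sum_i v_i$. The one subtlety is that domination is reflexive, so two equal vectors would violate the strict requirement $(d_{i_j},d_{i_k}) \notin \rel$; I would handle this by first deduplicating the $a$'s and the $b$'s, which does not affect whether a dominating pair exists. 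After deduplication, $u \le v$ together with $\sum_i u_i \ge \sum_i v_i$ forces $u = v$, so the decreasing-sum order is a valid natural ordering, computable in $\tOh(n)$ time. This establishes $\NestedBoxes \eqquad \VectorDomination$, and crucially both reductions leave the dimension $d$ unchanged.

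For the algorithmic statement I would follow the reduction $\NestedBoxes \lequad \VectorDomination$ and substitute the known $m^{2 - 1/\Oh(c\log^2 c)}$-time algorithm for \VectorDomination on $m$ vectors in dimension $c\log m$~\cite{ImpagliazzoLPS14,Chan15}; the parameter bookkeeping is the main quantitative obstacle. Unwinding Corollary~\ref{lem:chainDynToStatic} together with the $\StaticCLWS \lequad \Selection$ reduction, the \VectorDomination solver is only ever called on subinstances of size at most $\sqrt{N}$, where $N \le n$ is the current recursion size, while the dimension stays fixed at $d = c\log n$. Relative to a subinstance of size $\sqrt{N}$ this dimension equals $(c \cdot \tfrac{\log n}{\log \sqrt{N}})\log\sqrt{N}$, so at the top of the recursion the effective constant is $2c$; plugging the solver into the recursion $T^S(n) \le 2T^S(n/2) + T^{\StaticCLWS}(n/2) + \Oh(n)$ with $T^{\StaticCLWS}(N) = N\cdot T^{\VectorDomination}(\sqrt{N})$ produces a geometric sum dominated by its shallow levels. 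The point to verify is that the deep levels, where the subproblems shrink and the \VectorDomination speedup degrades (the effective constant grows without bound), contribute only lower-order terms: for subproblems of size below $n^{1/4}$ even the trivial quadratic bound sums to $\Oh(n^{3/2})$, while for the remaining levels the effective constant is $\Oh(c)$ and each level saves a factor $n^{-1/\Oh(c\log^2 c)}$. Collecting these contributions yields the claimed $n^{2 - 1/\Oh(c\log^2 c)}$ bound for weighted \NestedBoxes at $d = c\log n$.

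For the hardness statement I would use the reverse reduction $\VectorDomination \lequad \NestedBoxes$ (Theorem~\ref{thm:SelToLWS}), which is direct: it builds the single sequence $\bot, a_1, \dots, a_n, b_1, \dots, b_n, \top$ with all weights $-1$, solves the resulting unweighted \NestedBoxes instance on $2n+2$ boxes in the same dimension $d$, and reports a dominating pair iff the optimal chain has weight (at most) $-3$. Because this reduction is dimension-preserving and runs in subquadratic time, it transfers hardness at a fixed dimension. It then remains to recall that \VectorDomination is quadratic-time $\SETH$-hard for $d = \omega(\log n)$: this holds because \SetContainment is a special case of \VectorDomination and is computationally equivalent to \OV, which is quadratic-time $\SETH$-hard already for $d = \omega(\log n)$~\cite{Williams05}. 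Hence an $\Oh(n^{2-\varepsilon})$-time algorithm for \NestedBoxes at $d = \omega(\log n)$ would yield one for \VectorDomination and thus for \OV, refuting $\SETH$, which establishes the conditional lower bound.
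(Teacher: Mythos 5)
Your proposal is correct and follows essentially the same route as the paper: instantiate Theorems~\ref{thm:LWStoSel} and~\ref{thm:SelToLWS} with the componentwise-domination relation, plug in the \VectorDomination algorithms of \cite{ImpagliazzoLPS14,Chan15} for the upper bound, and use $\bot=0^d$, $\top=W^d$ and the coordinate-sum ordering for the hardness direction. You are in fact more careful than the paper on two points it glosses over — deduplicating equal vectors so that the reflexivity of domination does not break the natural-ordering condition (and hence the "weight $-3$ iff dominating pair" claim), and tracking how the effective dimension-to-size ratio degrades down the recursion — and both of your fixes are sound.
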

\begin{proof}
  Let $\rel$ be the relation that contains all pairs of non-negative,
  $d$-dimensional vectors $a,b$ such that $a_i \leq b_i$ for all
  $i$. Now $\Selection(\rel)$ is \VectorDomination, and
  $\CLWS(\rel)$ is the \NestedBoxes problem.

  Using the reduction from Theorem~\ref{thm:LWStoSel} and the
  algorithms for vector domination of the stated runtime
  \cite{ImpagliazzoLPS14, Chan15} we immediately get an algorithm for
  \NestedBoxes.

  We apply Theorem \ref{thm:SelToLWS} with $\top = W^d$ where $W$ is the
  largest coordinate in all input vectors, $\bot = 0^d$ and use the
  sum of the coordinates of the boxes as the natural
  ordering. $\SETH$-hardness of \NestedBoxes then follows from
  the $\SETH$-hardness of vector domination \cite{Williams05}.
\end{proof}

If we restrict \NestedBoxes and \VectorDomination to Boolean vectors,
then we get \SubsetChain and \SetContainment respectively. In this
case the upper bound improves to $n^{2-1/\Oh(\log c)}$ \cite{AbboudWY15}.

We would like to point out that the definition of \CLWS requires the
input to be a sequence of data items, and not a set.
Consider the following definition:

\begin{problem}[\ChainSet]
  Let a set of data items data items $\{x_0,\ldots,x_n\}$, weights
  $w_1, \ldots, w_{n-1} \in \{-W,\ldots,W\}$ and a relation
  $\rel(x_i,x_j)$ be given. The chain set problem for $\rel$, denoted
  $\ChainSet(\rel)$ asks to find the weight sequence
  $i_0, i_1, i_2,\ldots, i_k$ such that for all $j$ with
  $1 \leq j \leq k$ the pair $(x_{i_{j-1}},x_{i_j})$ is in the
  relation $\rel$ and the weight $\sum_{j=1}^{k-1} w_{i_j}$ is
  minimized.
\end{problem}

While \CLWS can always be solved in quadratic time, \ChainSet is
$\NP$-complete. For example, consider the box overlap problem: The
input is a set of boxes in two dimensions, given by the top left
corner and the bottom right corner, and the relation consists of all
pairs such that the two boxes overlap. This problem is a
generalization of the Hamiltonian path problem on induced subgraphs of
the two-dimensional grid, which is an $\NP$-complete problem
\cite{ItaiPS82}. This is a formal barrier to a more general reduction
than Theorem \ref{thm:SelToLWS}, as we need some mechanism to impose an
ordering on the data items.

\section{Near-linear time algorithms}
\label{sec:nearlinear}

In this section, we classify problems to be solvable in near-linear
time using the lens of our framework. Note that in these instances,
near-linear time solutions have already been known, however, our focus
on the static variants of LWS provides a simple, general approach to
find fast algorithms by identifying a simple ``core'' problem. Since
in this paper, we generally ignore subpolynomial factors in the
running time, we concentrate here on the reduction from some LWS
variant to its corresponding core problem and disregard reductions in
the other direction.

\subsection{Longest Increasing Subsequence}

The longest increasing subsequence problem \LIS has been first investigated by Fredman~\cite{Fredman75}, who gave an $\Oh(n\log n)$-time algorithm and gave a corresponding lower bound based on \Sorting. The following LWS instantiation is equivalent to \LIS. 
\begin{problem}[\LIS]  We define the following \LWS instantiation $\LIS = \LWS(\weightmat_\lis)$.\\
\Data{integers $x_1, \dots, x_n \in \{1, \dots, W\}$}
\Weights{$w_{i,j} = \begin{cases} -1 & \text{if } x_i < x_j \\ \infty & \text{ow.} \end{cases}$}
\end{problem}

It is straightforward to verify that $-T[n]$ yields the value of the longest increasing subsequence of $x_1, \dots, x_n$. Using the static variant of LWS introduced in Section~\ref{sec:lrlws}, we observe that $\LIS$ effectively boils down to \Sorting.

\begin{obs}\label{obs:lis}
\LIS can be solved in time $\tOh(n)$.
\end{obs}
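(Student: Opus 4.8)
The plan is to exhibit a near-linear time algorithm for $\LIS$ by combining the reduction of Lemma~\ref{lem:dynToStatic}, which reduces $\LWS(\weightmat_\lis)$ to $\StaticLWS(\weightmat_\lis)$ at the cost of only a polylogarithmic overhead, with a near-linear time algorithm for the static problem. By that lemma, it suffices to show that $\StaticLWS(\weightmat_\lis)$ admits an $\tOh(N)$-time algorithm; feeding $\varepsilon$ arbitrarily close to $1$ (or rather the $\Oh(N\log^c N)$ clause of the lemma) then yields an $\tOh(n)$-time algorithm for $\LIS$ overall.

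So first I would unfold what $\StaticLWS(\weightmat_\lis)$ actually asks. We are given intervals $I = \{a+1,\dots,a+N\}$ and $J = \{a+N+1,\dots,a+2N\}$, the data values $x_i$, and the correctly computed table entries $T[i]$ for $i\in I$, and we must compute, for every $j \in J$,
\[
 T'[j] = \min_{i\in I:\, x_i < x_j}\, T[i] - 1 = \Bigl(\min_{i\in I:\, x_i < x_j} T[i]\Bigr) - 1.
\]
The crucial observation is that the $-1$ factors out of the minimum, so the core computation is a prefix-minimum query: for each threshold $x_j$, find the smallest $T[i]$ among those $i\in I$ whose value $x_i$ is strictly less than $x_j$. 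This is exactly a range-minimum / predecessor-style problem keyed by the totally ordered values $x_i$.

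Next I would describe the algorithm for this core problem. The natural approach is to sort the indices $i\in I$ by their value $x_i$ (or exploit that the universe $\{1,\dots,W\}$ is small), and build, over the sorted order, an array $P[\cdot]$ of prefix minima of the $T[i]$ values; then each query $T'[j]$ reduces to locating the position of $x_j$ among the sorted values via binary search and reading off the corresponding prefix minimum, subtracting $1$. Sorting takes $\tOh(N)$ time and the $N$ binary-search queries take $\tOh(N)$ time in total, so $\StaticLWS(\weightmat_\lis)$ is solved in $\tOh(N)$ time. One can equivalently phrase this as an application of a balanced binary search tree or a van~Emde~Boas / fusion-tree structure over the value universe, but ordinary sorting plus prefix minima already gives the near-linear bound and matches the framing of Table~\ref{tab:positiveResults}, where the core problem for $\LIS$ is identified as \Sorting.

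The main obstacle, such as it is, is purely a bookkeeping one: I must make sure the two halves of the interval in the divide-and-conquer of Lemma~\ref{lem:dynToStatic} interact correctly with the value-indexed structure, i.e.\ that at each static call we are only minimizing over the prescribed index interval $I$ rather than over all previously seen indices, and that the strict inequality $x_i < x_j$ (as opposed to $\le$) is respected when locating $x_j$ in the sorted order (ties must be excluded, so the binary search should return the position of the first value $\ge x_j$ and take the prefix minimum strictly before it). These are easy to handle, and since no step is more than a sort followed by prefix-minimum precomputation and binary-search queries, the whole argument is routine once the factorization through \StaticLWS is invoked. Plugging the $\tOh(N)$ static bound into Lemma~\ref{lem:dynToStatic} yields the claimed $\tOh(n)$ bound for $\LIS$, recovering Fredman's classical result through the lens of the present framework.
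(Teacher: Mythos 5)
Your proposal is correct and follows essentially the same route as the paper: reduce to $\StaticLWS(\weightmat_\lis)$ via Lemma~\ref{lem:dynToStatic}, observe that the static problem is a prefix-minimum computation over the values sorted by $x_i$, and solve it in $\Oh(N\log N)$ time, giving $\Oh(n\log^2 n)$ overall. The only (immaterial) difference is that the paper answers the queries by a single merge-style pass over both sorted lists rather than per-query binary searches.
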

\begin{proof}
By Lemma~\ref{lem:dynToStatic}, we can reduce \LIS to the static variant $\StaticLWS(\weightmat_\lis)$. It is straight-forward to see that the latter can be reformulated as follows: Given $(a_1, T[1]), \dots, (a_N, T[N])$ and $b_1, \dots, b_N$, determine for every $j=1,\dots, N$, the value $T'[j] = -1+\min_{1\le i \le N, a_i < b_j} T[i]$. To do so, it suffices to sort the first list as $(a_{i_1}, T[i_1]), \dots, (a_{i_N}, T[i_N])$ with $a_{i_1} \le \dots \le a_{i_N}$ and the second as $b_{j_1}, \dots, b_{j_N}$ with $b_{j_1} \le \dots \le b_{j_N}$. Finally, a single pass over both lists will do: For each $k=1,\dots, N$, we search for the largest $\ell$ such that $a_{i_\ell} < b_{j_k}$, then the $T'$-value corresponding to $b_{j_\ell}$ is $-1+ \min_{1\le \ell' \le \ell} T[i_{\ell'}]$. By this approach, it is easy to see that after sorting, these values can be computed in time $\Oh(N)$. For the exact running time, note that solving $\StaticLWS(\weightmat_\lis)$ takes time $\Oh(N\log N)$ due to sorting, yielding a $\Oh(n\log^2 n)$-time algorithm for \LIS by Lemma~\ref{lem:dynToStatic}.
\end{proof}

\subsection{Unbounded Subset Sum}

\USubsetSum is a variant of the classical \SubsetSum, in which repetitions of elements are allowed. While improved pseudo-polynomial-time algorithms for \SubsetSum could only recently be found~\cite{KoiliarisX17, Bringmann17}, there is a simple algorithm solving \USubsetSum in time $\Oh(n \log n)$~\cite{Bringmann17}. It can be cast into an LWS formulation as follows.

\begin{problem}[\USubsetSum]  We define the following \LWS instantiation $\LIS = \LWS(\weightmat_\uss)$.\\
\Data{$S \subseteq [n]$}
\Weights{$w_{i,j} = \begin{cases} 0 & \text{if } j-i \in S \\ \infty & \text{ow.} \end{cases}$}
\end{problem}

Note that in this formulation, $T[n] = 0$ iff there is a multiset of numbers from $S$ that sums up to $n$. It is a straightforward observation that the static variant of \USubsetSum can be solved by classical convolution, i.e., $(\cdot, +)$-convolution.

\begin{obs}\label{obs:subsetsum}
\USubsetSum can be solved in time $\tOh(n)$.
\end{obs}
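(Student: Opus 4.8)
The plan is to follow the general framework of Section~\ref{sec:lrlws}: first invoke Lemma~\ref{lem:dynToStatic} to reduce \USubsetSum to its static variant $\StaticLWS(\weightmat_\uss)$, and then exhibit a near-linear-time algorithm for this static variant based on a single classical convolution. Since an $\Oh(N\log^c N)$-time algorithm for the static problem yields an $\Oh(n\log^{c+1} n)$-time algorithm for the full problem by Lemma~\ref{lem:dynToStatic} (with $g$ the identity, as in all our applications), a near-linear algorithm for the static core immediately gives the claimed $\tOh(n)$ bound for \USubsetSum.

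The key observation for the static step is that every finite $T$-value equals $0$: the weights $w_{i,j}$ lie in $\{0,\infty\}$ and $T[0]=0$, so by induction $T[i]\in\{0,\infty\}$ for all $i$, with $T[i]=0$ exactly when $i$ is representable as a (repetition-allowed) sum of elements of $S$. Now consider an instance of $\StaticLWS(\weightmat_\uss)$ given by $I=\{a+1,\dots,a+N\}$, $J=\{a+N+1,\dots,a+2N\}$, and the correctly computed values $T[a+1],\dots,T[a+N]$. Writing $i=a+p$ for $p\in\{1,\dots,N\}$ and $j=a+q$ for $q\in\{N+1,\dots,2N\}$, the relevant index difference is $j-i=q-p\in\{1,\dots,2N-1\}$, and $w_{i,j}=0$ precisely when $q-p\in S$. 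Hence
\[
T'[a+q] \;=\; \min_{i\in I} T[i]+w_{i,\,a+q} \;=\;
\begin{cases}
0 & \text{if there is } p\in\{1,\dots,N\} \text{ with } T[a+p]=0 \text{ and } q-p\in S,\\
\infty & \text{otherwise.}
\end{cases}
\]
This is exactly a Boolean (OR-)convolution. Define $u\in\{0,1\}^{N}$ by $u_p=1$ iff $T[a+p]=0$, and $s\in\{0,1\}^{2N}$ by $s_k=1$ iff $k\in S$ (so $s_k=0$ for $k>n$, since $S\subseteq[n]$). The classical $(\cdot,+)$-convolution $c=u\classicconv s$ has $c_q=\sum_p u_p\, s_{q-p}$, and therefore $T'[a+q]=0$ iff $c_q\ge 1$. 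Computing $c$ over the integers via FFT takes $\Oh(N\log N)$ time, after which a single thresholding pass recovers all values $T'[j]$, $j\in J$.

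I expect the reduction to the static variant and the convolution identity to be essentially routine; the only point requiring care is confirming the index bookkeeping so that the static subproblem genuinely has convolution structure (the quantity of interest depends on $i,j$ only through $q-p=j-i$) and that handling the Boolean convolution through an \emph{exact} integer FFT is legitimate — here the convolution entries are bounded by $N$, so no numerical issues arise. Putting the pieces together, $\StaticLWS(\weightmat_\uss)$ is solvable in $\Oh(N\log N)$ time, and by Lemma~\ref{lem:dynToStatic} this yields an $\Oh(n\log^2 n)=\tOh(n)$ algorithm for \USubsetSum, recovering the known near-linear bound of~\cite{Bringmann17} through the lens of our framework.
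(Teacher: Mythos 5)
Your proposal is correct and takes essentially the same route as the paper: reduce to the static variant via Lemma~\ref{lem:dynToStatic}, observe that the $\{0,\infty\}$ weights turn $\StaticLWS(\weightmat_\uss)$ into a Boolean convolution between the indicator vector of the finite $T$-values and the indicator vector of the relevant part of $S$, and solve it with one FFT in $\Oh(N\log N)$ time, yielding $\Oh(n\log^2 n)$ overall. Your explicit verification that every finite $T$-value equals $0$ is a detail the paper leaves implicit, but the argument is otherwise identical.
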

\begin{proof}
Noting that all weights $w_{i,j}$ are either $0$ or $\infty$, it is easy to see that the static variant $\StaticLWS(\weightmat_\uss)$ can be reformulated as follows: Given a subset $X\subseteq I = \{a+1, \dots, a+N\}$, determine, for all $j \in J = \{a+N+1, \dots, a+2N\}$, whether there exists some $i\in X$ such that $j-i \in S$. To do so, we do the following: We represent $X$ as an $N$-bit vector $x = (x_1, \dots, x_N) \in \{0,1\}^{N}$ with $x_i = 1$ iff. $a+i \in X$. Furthermore, we represent the ``relevant part'' of $S$ by defining a $2N$-bit vector $s = (s_1, \dots, s_{2N}) \in \{0,1\}^{2N}$ with $s_i = 1$ iff. $i \in S$. Then the $(\cdot, +)$-convolution $r = x \classicconv s$ of $x$ and $s$ allows us to determine $T'[a+N+j]$ for $j = 1, \dots, N$: this values is 0 iff $r_{N+j} > 0$ and $\infty$ otherwise. Correctness follows from the observation that $r_{N+j} > 0$ is equivalent to the existence of some $i \in [N]$ and $k\in [2N]$ with $i+k = N+j$ and $x_i = s_k = 1$. This in turn is equivalent to $a+i \in X$ and $(a+N+j)-(a+i) = N+j - i = k \in S$, as desired.

Thus $\StaticLWS(\weightmat_\uss)$ can be solved by a single convolution computation, which can be performed in time $\Oh(N\log N)$. Thus by Lemma~\ref{lem:dynToStatic}, this gives rise to a $\Oh(n\log^2 n)$-time algorithm for $\USubsetSum$.
\end{proof}

\subsection{Concave LWS}

The concave LWS problem is a special case of LWS in which the weights satisfy the quadrangle inequality. Since a complete description of the input instance consists of $\Omega(n^2)$ weights, we use the standard assumption that each $w_{i,j}$ can be queried in constant time. This allows for sublinear solutions in the input description, in particular there exist $\Oh(n)$-time algorithms~\cite{Wilber88,GalilP90}.

\begin{problem}[\ConcaveLWS] We define the following \LWS instantiation $\LIS = \LWS(\weightmat_\concave)$.\\
\Weights{$w_{i,j}$ given by oracle access, satisfying $w_{i,j} + w_{i',j'} \le w_{i',j} + w_{i, j'}$ for $i\le i'\le j \le j'$.}
\end{problem}

We revisit \ConcaveLWS and its known connection to the problem of computing column (or row) minima in a totally monotone\footnote{A matrix $M=(m_{i,j})_{i,j}$ is totally monotone if for all $i<i'$ and $j < j'$, we have that $m_{i,j} > m_{i',j}$ implies that $m_{i,j'} > m_{i',j'}$. For a more comprehensive treatment, we refer to~\cite{SMAWK, GalilP90}.} $(n\times n)$-matrix, which we call the \SMAWKprob because of its remarkable $\Oh(n)$-time solution called the \SMAWK algorithm~\cite{SMAWK}. 

\begin{obs}\label{obs:concavelws}
\ConcaveLWS can be solved in time $\tOh(n)$.
\end{obs}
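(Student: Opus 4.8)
The plan is to mirror the structure already used for \LIS and \USubsetSum in Observations~\ref{obs:lis} and~\ref{obs:subsetsum}: first reduce \ConcaveLWS to its static variant $\StaticLWS(\weightmat_\concave)$ via Lemma~\ref{lem:dynToStatic}, and then show that this static variant is exactly an instance of the \SMAWKprob, which the \SMAWK algorithm solves in time $\Oh(n)$. Since Lemma~\ref{lem:dynToStatic} only loses a logarithmic factor, an $\Oh(N)$-time algorithm for the static problem immediately yields an $\tOh(n)$-time algorithm for \ConcaveLWS.

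First I would unfold the definition of $\StaticLWS(\weightmat_\concave)$. On intervals $I=\{a+1,\dots,a+N\}$ and $J=\{a+N+1,\dots,a+2N\}$ with the values $T[i]$, $i\in I$, already computed, the task is to determine $T'[j]=\min_{i\in I} T[i]+w_{i,j}$ for each $j\in J$. I would define the $(N\times N)$-matrix $M$ by $m_{i,j} := T[a+i] + w_{a+i,\,a+N+j}$ for $i,j\in[N]$; then $T'[a+N+j]$ is precisely the minimum of column $j$ of $M$, so $\StaticLWS(\weightmat_\concave)$ is literally the problem of computing all column minima of $M$.

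The key step is verifying that $M$ is totally monotone, so that \SMAWK applies. I would establish this from the quadrangle inequality satisfied by $\weightmat$: for $i<i'$ and $j<j'$ (with the implicit offsets so that the indices into $\weightmat$ respect $a+i \le a+i' \le a+N+j \le a+N+j'$, which holds automatically since $I$ precedes $J$), the inequality $w_{a+i,\,a+N+j} + w_{a+i',\,a+N+j'} \le w_{a+i',\,a+N+j} + w_{a+i,\,a+N+j'}$ holds. Adding the (cancelling) terms $T[a+i]$ and $T[a+i']$ to both sides shows that $M$ itself satisfies the quadrangle (Monge) inequality, which is well known to imply total monotonicity. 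Hence the \SMAWK algorithm computes all column minima of $M$ in time $\Oh(N)$.

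I expect the main obstacle to be purely bookkeeping: keeping the index offsets between the static subinterval and the global \LWS instance consistent, and confirming that the ordering $i\le i'\le j\le j'$ demanded by the quadrangle inequality is always met by the structure of \StaticLWS (namely that the entire index set $I$ lies strictly below $J$), so that the quadrangle inequality is invoked only on admissible index quadruples. No genuinely new idea is needed beyond recognizing the static problem as a Monge column-minima instance; the easiness then follows from the $\Oh(N)$-time \SMAWK guarantee together with Lemma~\ref{lem:dynToStatic}.
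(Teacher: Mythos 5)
Your proposal is correct and follows essentially the same route as the paper: reduce to $\StaticLWS(\weightmat_\concave)$ via Lemma~\ref{lem:dynToStatic}, observe that the static problem is computing column minima of the matrix $m_{i,j}=T[i]+w_{i,j}$, which is totally monotone because $w$ satisfies the quadrangle inequality, and apply the $\Oh(N)$-time \SMAWK algorithm. You merely spell out the Monge-to-total-monotonicity verification in slightly more detail than the paper, which states it as immediate.
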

\begin{proof}
The static variant of \ConcaveLWS can be formulated as follows: Given intervals $I= \{a+1,\dots, a+N\}$ and $J=\{a+N+1, \dots, a+2N\}$, we define a matrix $M := (m_{i,j})_{i\in I ,j\in J})$ with $m_{i,j} = T[i] + w_{i,j}$. It is easy to see that $M$ is a totally monotone matrix since $w$ satisfies the quadrangle inequality. Note that the minimum of column $j\in J$ in $M$ is $\min_{i \in I} T[i]+w_{i,j} = T'[j]$ by definition. Thus, using the SMAWK algorithm we can determine all $T'[j]$ in simultaneously in time $\Oh(N)$.

Thus by Lemma~\ref{lem:dynToStatic}, we obtain an $\Oh(n \log n)$-time algorithm for \ConcaveLWS.
\end{proof}

\myparagraph{Acknowledgments.}  We would like to thank Karl Bringmann
and Russell Impagliazzo for helpful discussions and comments.

\bibliography{lws}

\end{document}